\newcommand{\edge}[1]{\ar@{-}[#1]}
\newtheorem{definition}{Definition}
\newtheorem{theorem}{Theorem}
\newtheorem{example}{Example}
\newtheorem{proof}{Proof}
 \newcommand{\M}{\mathcal{M}}
\newcommand{\Def}{\overset{\operatorname{def}}{=}}
\newcommand{\parties}[1]{\mathcal{P}(#1)}
\newcommand{\Z}{Z\hspace{-.6em}Z}
\newcommand{\R}{I\hspace{-.2em}R}
\newcommand{\bbm}{\begin{bmatrix}}
\newcommand{\ebm}{\end{bmatrix}}
\newcommand{\sh}[1]{\ensuremath{#1^\sharp}}
\newcommand{\lfp}{\ensuremath{\textrm{lfp}}}
\newcommand{\gfp}{\ensuremath{\textrm{gfp}}}
\newcommand{\treillisc}{\ensuremath{
(\mathcal{D}, \sqsubseteq, \bigsqcup, \bigsqcap)}}
\title{Constraint-based reachability}
\author{Arnaud Gotlieb
\institute{Certus Software V\&V Center\\
SIMULA Research Laboratory\\
Lysaker, Norway}
\email{\quad arnaud@simula.no}
\and
Tristan Denmat
\institute{INRIA Rennes Bretagne-Atlantique\\
Rennes, France}
\email{\quad Tristan.Denmat@inria.fr}
\and
Nadjib Lazaar
\institute{LIRMM\\
Montpellier, France}
\email{\quad lazaar@lirmm.fr}
}
\begin{document}
\maketitle

\begin{abstract}
Iterative imperative programs can be considered as infinite-state systems computing over possibly unbounded domains. Studying reachability in these systems is challenging as it requires to deal with an infinite number of states with standard backward or forward exploration strategies. An approach that we call {\it Constraint-based reachability}, is proposed to address reachability problems by exploring program states using a constraint model of the whole program. The keypoint of the approach is to interpret imperative constructions such as conditionals, loops, array and memory manipulations with the fundamental notion of {\it constraint} over a computational domain. By combining constraint filtering and abstraction techniques, {\it Constraint-based reachability} is able to solve reachability problems which are usually outside the scope of backward or forward exploration strategies. This paper proposes an interpretation of classical filtering consistencies used in Constraint Programming as abstract domain computations, and shows how this approach can be used to produce a constraint solver that efficiently generates solutions for reachability problems that are unsolvable by other approaches.
\end{abstract}

\section{Introduction}
Modern automated program verification can be seen as the convergence of three distinct approaches, namely Software Testing, Model-Checking and Program Proving. Even if the general verification problems are often undecidable, investigations on these approaches have delivered the most efficient automated techniques to show that a given property is satisifed or not by all the reachable states of an infinite-state system. 

Several authors have advocated the usage of {\it constraints} to represent an infinite set of states and the usage of constraint solvers to efficiently address reachability problems \cite{Car96,DP01,Fla04,BCE08}. In automated program verification problems, the goal is to find a state of the program which violates a given safety property, i.e., an {\it unsafe state}. Two distinct strategies have been investigated to explore programs with constraints, namely the forward analysis and the backward analysis strategies. In forward analysis, a set of reachable states is explored by computing the transition from the initial states of a program to the next states in forward way. If an unsafe state is detected to belong to the set of reachable states during this exploration then a property violation is reported. In backward analysis, states are computed from an hypotetical unsafe state in a backward way with the hope to discover that one of those is actually an initial state. One advantage of backward analysis over forward analysis is its usage of the targeted unsafe state to refine the state search space.  
However, both strategies are quite powerful and have been implemented into several software model checkers based on constraint solving \cite{HJM03,Fla04} and automated test case generators \cite{WMM05,GKS05,FW07,CG10,BH11}.
 
In this paper, we present an integrated constraint-based strategy that can benefit from the strengths of both forward and backward analysis. The keypoint of the approach, that we have called {\it Constraint-Based Reachability (CBR)}, is to interpret imperative constructions such as conditionals, loops, array and memory manipulations with the fundamental notion of {\it constraint} over a computational domain. By combining constraint filtering and abstraction techniques, CBR is able to solve reachability problems which are usually outside the scope of backward or forward exploration strategies. A main difference is that CBR does not sequentially explore the execution paths of the program~; the exploration is driven by the constraint solver which picks-up the constraint to explore depending on the priorities that are attached to them. It is worth noticing that applying CBR to program exploration results in a semi-correct procedure only, meaning that there is no termination guarantee. CBR has been mainly applied in automatic test data generation for iterative programs \cite{GBR98,GBR00}, programs that manipulate pointers towards named locations of the memory \cite{GDB05b,GDB07}, programs on dynamic data structures and anonymous locations \cite{CBG09}, programs containing floating-point computations \cite{BGM06}. A major improvement of the approach was brought by the usage of Abstract Interpretation techniques to enrich the filtering capabilities of the constraints used to represent conditionals and loops \cite{DGD07a,DGD07b}. This approach permitted us to build efficient test data generator tools for a subset of C \cite{Got09} and Java Bytecode \cite{CG10}.

The first contribution of this paper is the interpretation of classical filtering consistencies notions in terms of 
abstract domain computations. Constraint filtering is the main approach behind the processing of constraints in a finite domains constraint solver. We show in general the existence of tight links between classical filtering techniques and abstract domain computations that were not pointed out elsewhere. We also give the definition of a new consistency filtering inspired from the Polyhedral abstract domain, as consequence of these links. 

The second contribution is the description of a special constraint handling any iterative construction. The constraint {\it w} captures iterative reasoning in a constraint solver and as such, is able to deduce information which is outside the scope of any pure forward or backward abstract analyzer. Its filtering capabilities combines both constraint reasoning and abstract domain computations in order to propagate informations to the rest of the constraint system. In this paper, we focus on the theoretical foundations of the constraints, while giving examples of its usage for test case generation over iterative programs.

{\bf Outline of the paper.}
The rest of the paper is organized as follows. Sec.2 introduces the necessary background in Abstract Interpretation to understand the contributions of the paper. Sec.3 establishes the link between classical constraint filtering and abstract domain computations. Sec.4 describes the theoretical foundation of the {\it w} constraint for handling iterative constructions while Sec.5 concludes the paper.

\section{Background}

%\section{Abstract Interpretation}

Abstract Interpretation (AI) is a theoretical framework introduced by Cousot and Cousot in \cite{CC77} to manipulate
abstractions of program states. An abstraction can be used to simplify program analysis problems otherwise not
computable in realistic time, to manageable problems more easily solvable.
%Program semantics captures formally all the possible behaviours of a program.
%The goal of program analysis is usually reduced to show that an unwanted
%behaviour belongs or not the program semantics. Unfortunately, this problem is
%undecideable in the general case as precise program semantics is usually not computable.
%Abstract Interpretation (AI) allows to generate correct-by-construction program analysis. 
Instead of working on the concrete semantics of a 
program\footnote{Program semantics captures formally all the possible behaviours of a program.}, 
AI computes results over an abstract semantics 
allowing so to produce over-approximating properties of the concrete semantics.
In the following we introduce the basic notions required to understand AI. 

\begin{definition}[Partially ordered set (poset)]
  Let $\sqsubseteq$ be a partial order law, then
  the pair $(\mathcal{D},\sqsubseteq)$ is called a {\it poset} iff
\begin{displaymath}
\begin{array}{ll}
\forall x \in \mathcal{D}, x \sqsubseteq x & \textrm{\textit{(reflexive)}} \\
\forall x,y \in \mathcal{D}, x \sqsubseteq y \land y \sqsubseteq x \implies x = y & \textrm{\textit{(anti-symmetry)}} \\
\forall x,y,z \in \mathcal{D}, x \sqsubseteq y \land y \sqsubseteq z \implies x \sqsubseteq z & \textrm{\textit{(transitive)}} \\
\end{array}
\end{displaymath}
\end{definition}

\begin{definition}[Complete lattice]
  A complete lattice is a 4-tuple $(\mathcal{D}, \sqsubseteq, \bigsqcup, \bigsqcap)$ such that
\begin{itemize}
\item $(\mathcal{D},\sqsubseteq)$ is a poset 
\item $\bigsqcup$ is a upper bound: $\forall \mathcal{S} \subseteq \mathcal{D}$, we have
  \begin{eqnarray*}
    && \forall x \in \mathcal{S}, x \sqsubseteq \bigsqcup \mathcal{S}\\
    && \forall y \in \mathcal{D},(\forall x \in \mathcal{S}, x \sqsubseteq y) \implies \bigsqcup \mathcal{S} \sqsubseteq y
  \end{eqnarray*}
\item   $\bigsqcap$ is a lower bound: $\forall \mathcal{S} \subseteq \mathcal{D}$, we have
  \begin{eqnarray*}
    && \forall x \in \mathcal{S}, \bigsqcap \mathcal{S}  \sqsubseteq x\\
    && \forall y \in \mathcal{D},(\forall x \in \mathcal{S}, y \sqsubseteq x) \implies y  \sqsubseteq \bigsqcap \mathcal{S}
  \end{eqnarray*}
\end{itemize}
\end{definition}
\noindent 
Complete lattices have a single smallest element $\bot = \bigsqcap \mathcal{D}$ 
and a single greatest element $\top = \bigsqcup \mathcal{D}$. 
Program semantics can usually be expressed as the least fix point of a monotonic and continuous function.
A function $f$ from a complete lattice $\treillisc$ to itself is monotonic iff 
$\forall l_1,l_2 \in \mathcal{D}, l_1 \sqsubseteq l_2 \implies f(l_1) \sqsubseteq f(l_2)$.
It is continuous iff
$\forall \mathcal{S} \subseteq \mathcal{D}, f(\bigsqcup
\mathcal{S}) = \bigsqcup_{s \in \mathcal{S}}(f(s))$ and 
$f(\bigsqcap \mathcal{S})= \bigsqcap_{s \in S}(f(s))$. 

\noindent
The following Theorem guarantees the existence
of the fix points of a monotonic function.
\begin{theorem}[Knaster-Tarski]
  In a complete lattice $\treillisc$, for all monotonic functions\\ 
$f: \mathcal{D} \rightarrow \mathcal{D}$, 
\begin{itemize}
\item the least fix point of $f$ (i.e., $lfp(f)$) exists and $lfp(f) = \bigsqcap\{x~|~f(x) \sqsubseteq x\}$
\item the greatest fix point of $f$ (i.e., $gfp(f)$) exists and $gfp(f) = \bigsqcup\{x~|~f(x) \sqsubseteq x\}$
\end{itemize}
\end{theorem}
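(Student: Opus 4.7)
The plan is to set $P = \{x \in \mathcal{D} \mid f(x) \sqsubseteq x\}$, the set of pre-fixed points of $f$, and prove that $a := \bigsqcap P$ is the least fixed point of $f$. Since $(\mathcal{D}, \sqsubseteq, \bigsqcup, \bigsqcap)$ is a complete lattice, $\bigsqcap P$ exists for every subset $P \subseteq \mathcal{D}$; moreover $P$ is non-empty, because the top element $\top$ trivially satisfies $f(\top) \sqsubseteq \top$, so $a$ is well-defined.

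The first step is to verify that $a$ is itself a pre-fixed point, i.e. $f(a) \sqsubseteq a$. For every $x \in P$, $a \sqsubseteq x$ by definition of the infimum, and monotonicity of $f$ then yields $f(a) \sqsubseteq f(x) \sqsubseteq x$. Hence $f(a)$ is a lower bound of $P$, and the universal property of $\bigsqcap$ forces $f(a) \sqsubseteq a$.

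The second step promotes this inequality to equality. Applying $f$ once more and using monotonicity gives $f(f(a)) \sqsubseteq f(a)$, which says $f(a) \in P$; hence by definition of $a$ as the infimum of $P$ we get $a \sqsubseteq f(a)$. Antisymmetry closes the loop: $f(a) = a$. Minimality is immediate, since any fixed point $y$ satisfies $f(y) = y \sqsubseteq y$, so $y \in P$ and therefore $a \sqsubseteq y$.

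The greatest-fixed-point claim follows by the standard order duality: rerunning the same argument in the reversed complete lattice $(\mathcal{D}, \sqsupseteq, \bigsqcap, \bigsqcup)$, where pre-fixed points become post-fixed points, yields $\bigsqcup\{x \mid x \sqsubseteq f(x)\}$ as the greatest fixed point of $f$, from which the statement of the theorem follows up to a swap of the order inside the defining set. I do not foresee any real obstacle; the only care required is to track the direction of each $\sqsubseteq$ inequality when applying monotonicity, and to invoke antisymmetry at precisely the right moment to turn the two opposite inequalities between $a$ and $f(a)$ into the fixed-point equation.
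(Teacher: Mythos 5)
The paper itself gives no proof of this theorem --- it is quoted as a classical result with a pointer to the literature --- so there is nothing internal to compare against; your argument is the standard Knaster--Tarski proof and it is correct. Both steps are sound: $f(a)\sqsubseteq a$ because $f(a)$ is a lower bound of the set $P$ of pre-fixed points, and $a\sqsubseteq f(a)$ because the first inequality plus monotonicity puts $f(a)$ itself into $P$; antisymmetry and the minimality check are exactly as they should be. One point deserves to be made more forcefully than your parenthetical ``up to a swap of the order inside the defining set'': the dual argument proves $\gfp(f)=\bigsqcup\{x \mid x\sqsubseteq f(x)\}$ (the join of the \emph{post}-fixed points), and this is \emph{not} equivalent to the formula printed in the theorem. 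The set $\{x \mid f(x)\sqsubseteq x\}$ always contains $\top$, so its join is $\top$, which is generally not a fixed point of $f$; the second bullet of the statement as written is therefore a typo, and your version is the correct one. You should state this as a correction rather than as a harmless reindexing.
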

\noindent
In addition, when the functions are continuous, these fix points can be computed using an algorithm derived from the
following theorem:
\begin{theorem}[Kleene]\label{theo:limites}
  In a complete lattice $\treillisc$, for all monotonic and continuous functions 
  $f : \mathcal{D} \rightarrow \mathcal{D}$, the least fix point of
  $f$ is equal to $\bigsqcup\{f^n(\bot)~|~n \in \mathbb{N}\}$ and the greatest fix point
  of $f$ is equal to $\bigsqcap\{f^n(\top)~|~n \in \mathbb{N}\}$
\end{theorem}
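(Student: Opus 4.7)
The plan is to prove the least fixed point statement (the greatest fixed point case being entirely dual) in three steps: establish that the iterated sequence $(f^n(\bot))_{n \in \mathbb{N}}$ is a well-defined ascending chain, verify that its join is a fixed point using continuity, and finally show this fixed point is the least one by comparing it to an arbitrary fixed point.

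First I would show by induction on $n$ that $f^n(\bot) \sqsubseteq f^{n+1}(\bot)$. The base case $\bot \sqsubseteq f(\bot)$ follows from the definition of $\bot$ as the least element of $\mathcal{D}$. The inductive step applies monotonicity of $f$ to the inequality $f^{n}(\bot) \sqsubseteq f^{n+1}(\bot)$ to obtain $f^{n+1}(\bot) \sqsubseteq f^{n+2}(\bot)$. Since $\mathcal{D}$ is a complete lattice, the set $\mathcal{S} = \{f^n(\bot) \mid n \in \mathbb{N}\}$ admits a join; call it $L = \bigsqcup \mathcal{S}$.

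Next I would verify that $L$ is a fixed point. Using continuity of $f$,
\begin{displaymath}
f(L) = f\Bigl(\bigsqcup \mathcal{S}\Bigr) = \bigsqcup_{n \in \mathbb{N}} f(f^n(\bot)) = \bigsqcup_{n \in \mathbb{N}} f^{n+1}(\bot).
\end{displaymath}
The set $\{f^{n+1}(\bot) \mid n \in \mathbb{N}\}$ equals $\mathcal{S} \setminus \{\bot\}$, and since $\bot$ is the least element of $\mathcal{D}$ (hence of $\mathcal{S}$), removing it does not change the join. Therefore $f(L) = L$.

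Finally I would show minimality. Let $x \in \mathcal{D}$ be any post-fixed point, i.e.\ $f(x) \sqsubseteq x$; by Knaster--Tarski it suffices to treat such $x$. Since $\bot \sqsubseteq x$, repeated application of monotonicity yields $f^n(\bot) \sqsubseteq f^n(x) \sqsubseteq x$ for every $n$, where the second inequality comes from iterating $f(x) \sqsubseteq x$. Thus $x$ is an upper bound of $\mathcal{S}$, so $L \sqsubseteq x$. The dual argument for $\gfp(f)$ starts from $f(\top) \sqsubseteq \top$, produces a descending chain, and uses the meet-continuity clause of the definition. The main subtlety, and essentially the only non-mechanical step, is the manipulation of joins in the continuity calculation above, specifically the argument that prepending $\bot$ to a chain leaves its join unchanged; everything else reduces to monotonicity and induction.
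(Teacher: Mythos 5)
Your argument is correct and is the standard Kleene-iteration proof; the paper itself states this theorem as background and gives no proof of its own (it is only implicitly deferred to the literature), so there is nothing to diverge from. The only imprecision is the claim that $\{f^{n+1}(\bot) \mid n \in \mathbb{N}\}$ equals $\mathcal{S} \setminus \{\bot\}$ (false in the degenerate case $f^k(\bot)=\bot$ for some $k\geq 1$), but the conclusion you actually use --- that adjoining or removing $\bot$ from a nonempty set leaves its join unchanged --- is valid, and note that the paper's unusually strong definition of continuity (preservation of arbitrary joins and meets, not just directed ones) makes your continuity step and its dual go through without further care.
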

\noindent
As $\bot, f(\bot), \ldots, f^n(\bot), \ldots$ is an increasing suite, we get
$\bigsqcup\{f^n(\bot)~|~n \leq k\} = f^k(\bot)$.
Hence, $\lfp(f) = \lim_{k \rightarrow +\infty} f^k(\bot)$
and $\gfp(f) = \lim_{k \rightarrow +\infty} f^k(\top)$.

\noindent
For reaching the least fix point of a monotonic and continuous function in a complete lattice, it
suffices to iterate $f$ from $\bot$ until a fix point is reached.

\noindent
Let $(\mathcal{D}, \sqsubseteq,\bigsqcup, \bigsqcap)$ be a complete lattice called the {\it concrete lattice} 
and $f$ a function that defines some concrete semantics over this lattice, let
$(\sh{\mathcal{D}}, \sh{\sqsubseteq})$ be a poset called the {\it abstract poset}, and
$\sh{f} : \sh{\mathcal{D}} \rightarrow \sh{\mathcal{D}}$ be a continuous function,
then {\it Abstract Interpretation} aims at computing a fix point of $\sh{f}$ in order to
over-approximate the computation performed by $f$. 

\noindent
Depending on whether the abstract poset is a complete lattice or not,
we have distinct theoretical results regarding the abstraction. Proofs of
the following theorems can be found in~\cite{CH78}.

\paragraph{Galois connection}
When the abstract poset is a complete lattice, the notion of {\it Galois connection} is available
to link the abstract computations with the concrete lattice. 

\begin{definition}[Galois connection]
Let $(\mathcal{D}, \sqsubseteq, \bigsqcup, \bigsqcap)$  
and $(\sh{\mathcal{D}}, \sh{\sqsubseteq}, \sh{\bigsqcup}, \sh{\bigsqcap})$ be two complete lattices,
then a pair of functions
$\alpha : \mathcal{D} \rightarrow \sh{\mathcal{D}}$ and
$\gamma : \sh{\mathcal{D}} \rightarrow  \mathcal{D}$ is a {\it Galois connection} iff
$\forall x \in \mathcal{D}, \forall y \in \sh{\mathcal{D}}, \alpha(x) \sh{\sqsubseteq} y \iff x \sqsubseteq \gamma(y)$
noted:
\begin{center}
$(\mathcal{D}, \sqsubseteq, \bigsqcup, \bigsqcap) \overset{\gamma}{\underset{\alpha}{\leftrightarrows}} 
(\mathcal{D}^\sharp, \sqsubseteq^\sharp, \bigsqcup^\sharp, \sh{\bigsqcap})
$
\end{center}
\end{definition}
\noindent
Next definition establishes the correction property of an analysis.
\begin{definition}[Sound approximation]\label{correct-app}
Let 
$(\mathcal{D}, \sqsubseteq, \bigsqcup, \bigsqcap) \overset{\gamma}{\underset{\alpha}{\leftrightarrows}} (\mathcal{D}^\sharp, \sqsubseteq^\sharp, \bigsqcup^\sharp, \sh{\bigsqcap}) $ be a Galois connection, then a function
$\sh{f} : \sh{\mathcal{D}} \rightarrow \sh{\mathcal{D}}$
is a {\it sound approximation} of $f : \mathcal{D} \rightarrow \mathcal{D}$ iff
$$ \forall y \in \mathcal{D}^{\sharp}, f \circ \gamma (y) \sqsubseteq \gamma \circ f^\sharp(y) $$
\end{definition}
\noindent
Consquently, we have the following notion:
\begin{theorem} [Smallest sound approximation]
 \label{theo:approx}
Let $(\mathcal{D}, \sqsubseteq, \bigsqcup, \bigsqcap) \overset{\gamma}{\underset{\alpha}{\leftrightarrows}} (\mathcal{D}^\sharp, \sqsubseteq^\sharp, \bigsqcup^\sharp,\sh{\bigsqcap}) $ be a Galois connection,
and a function $f : \mathcal{D} \rightarrow \mathcal{D}$, then the {\it smallest sound approximation} of $f$
is $\alpha \circ f \circ \gamma$
\end{theorem}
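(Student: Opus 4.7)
The plan is to prove the statement in two parts: (i) $\alpha \circ f \circ \gamma$ is itself a sound approximation in the sense of Definition~\ref{correct-app}, and (ii) any other sound approximation $\sh{g}$ lies above it in $\sh{\sqsubseteq}$. Both parts rest on two standard consequences of the Galois connection that I would derive first as lemmas: the monotonicity of $\alpha$ and $\gamma$, the extensivity $x \sqsubseteq \gamma \circ \alpha(x)$ for all $x \in \mathcal{D}$, and the reductivity $\alpha \circ \gamma(y) \sh{\sqsubseteq} y$ for all $y \in \sh{\mathcal{D}}$. Each of these is obtained by instantiating the bi-implication $\alpha(x) \sh{\sqsubseteq} y \iff x \sqsubseteq \gamma(y)$ appropriately (e.g.\ putting $y = \alpha(x)$ and using reflexivity gives extensivity, and similarly for the dual).

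For part (i), I unfold the definition of soundness applied to $\sh{f} := \alpha \circ f \circ \gamma$: I must check that for every $y \in \sh{\mathcal{D}}$,
$$ f \circ \gamma(y) \sqsubseteq \gamma \circ (\alpha \circ f \circ \gamma)(y). $$
Setting $x := f \circ \gamma(y) \in \mathcal{D}$, this is precisely $x \sqsubseteq \gamma \circ \alpha(x)$, which is the extensivity property. So $\alpha \circ f \circ \gamma$ is sound.

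For part (ii), let $\sh{g}$ be any sound approximation of $f$. By Definition~\ref{correct-app}, $f \circ \gamma(y) \sqsubseteq \gamma \circ \sh{g}(y)$ for every $y \in \sh{\mathcal{D}}$. Applying the monotone map $\alpha$ to both sides,
$$ \alpha \circ f \circ \gamma(y) \;\sh{\sqsubseteq}\; \alpha \circ \gamma \circ \sh{g}(y). $$
Reductivity $\alpha \circ \gamma \sh{\sqsubseteq} \mathrm{id}_{\sh{\mathcal{D}}}$ applied at $\sh{g}(y)$ yields $\alpha \circ \gamma \circ \sh{g}(y) \sh{\sqsubseteq} \sh{g}(y)$, and transitivity of $\sh{\sqsubseteq}$ gives the desired inequality $\alpha \circ f \circ \gamma(y) \sh{\sqsubseteq} \sh{g}(y)$. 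Since this holds pointwise for every $y$, $\alpha \circ f \circ \gamma$ is indeed the smallest sound approximation.

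The only mildly delicate step is the preliminary derivation of monotonicity, extensivity and reductivity from the Galois connection axiom; once those are in hand the two halves of the argument are essentially one line each, and no fixed-point machinery is needed. I do not anticipate a real obstacle here, but I would be careful to use the correct order ($\sqsubseteq$ on concrete side, $\sh{\sqsubseteq}$ on abstract side) since mixing them is the most common slip in such calculations.
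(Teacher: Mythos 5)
Your proof is correct. Note that the paper itself does not prove this theorem --- it defers to the cited reference \cite{CH78} --- so there is nothing to compare against line by line; your argument is the standard one, and both halves check out: extensivity $x \sqsubseteq \gamma\circ\alpha(x)$ (instantiated at $x = f\circ\gamma(y)$) gives soundness of $\alpha\circ f\circ\gamma$, and monotonicity of $\alpha$ plus reductivity $\alpha\circ\gamma \sh{\sqsubseteq} \mathrm{id}$ gives pointwise minimality, with all three auxiliary facts correctly derived from the adjunction. One small economy you could make in part (ii): the adjunction $\alpha(x) \sh{\sqsubseteq} y \iff x \sqsubseteq \gamma(y)$ applied directly to the soundness inequality $f\circ\gamma(y) \sqsubseteq \gamma(\sh{g}(y))$ yields $\alpha\circ f\circ\gamma(y) \sh{\sqsubseteq} \sh{g}(y)$ in a single step, bypassing the explicit use of monotonicity and reductivity.
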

\noindent
This theorem implies that any function greater than
$\alpha \circ f \circ \gamma$ is a sound approximation of $f$ and
the following theorem characterizes the results of fixpoint computations:
\begin{theorem}[Fixpoint computations with sound approximation]
  Let $(\mathcal{D}, \sqsubseteq,\bigsqcup, \bigsqcap)
  \overset{\gamma}{\underset{\alpha}{\leftrightarrows}}
  (\mathcal{D}^\sharp, \sqsubseteq^\sharp, \bigsqcup^\sharp, \sh{\bigsqcap}) $ be
a Galois connection, let $\sh{f}:
  \sh{\mathcal{D}} \rightarrow \sh{\mathcal{D}}$ and $f: \mathcal{D}\rightarrow \mathcal{D}$ 
be two monotonic functions such that $\sh{f}$ is a sound approximation of $f$, 
then, we have:
\begin{eqnarray*}
&&\lfp(f) \sqsubseteq \gamma(\lfp(\sh{f}))~\textrm{and} \\
&&\gfp(f) \sqsubseteq \gamma(\gfp(\sh{f}))
\end{eqnarray*}
\end{theorem}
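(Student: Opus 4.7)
The plan is to handle the two inequalities separately, invoking the Knaster--Tarski characterisations of least and greatest fixpoints; Kleene-style iteration is not needed since only monotonicity is assumed. For $\lfp(f) \sqsubseteq \gamma(\lfp(\sh{f}))$, I would show that $\gamma(\lfp(\sh{f}))$ is a pre-fixed point of $f$. Starting from $\sh{f}(\lfp(\sh{f})) = \lfp(\sh{f})$ and applying $\gamma$, the soundness hypothesis (Definition~\ref{correct-app}) instantiated at $y = \lfp(\sh{f})$ yields
\[
f(\gamma(\lfp(\sh{f}))) \sqsubseteq \gamma(\sh{f}(\lfp(\sh{f}))) = \gamma(\lfp(\sh{f})).
\]
So $\gamma(\lfp(\sh{f}))$ belongs to $\{x \mid f(x) \sqsubseteq x\}$, whose infimum is $\lfp(f)$ by Knaster--Tarski, and the first inequality follows immediately.

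For $\gfp(f) \sqsubseteq \gamma(\gfp(\sh{f}))$, the naive analogue breaks down: repeating the previous argument with $\gfp$ still only exhibits $\gamma(\gfp(\sh{f}))$ as a pre-fixed point of $f$, which bounds $\lfp(f)$, not $\gfp(f)$. My workaround is to transpose the inequality into the abstract lattice through the Galois connection. First, I would combine the soundness $f \circ \gamma \sqsubseteq \gamma \circ \sh{f}$ with the extensivity $\mathrm{id}_{\mathcal{D}} \sqsubseteq \gamma \circ \alpha$ and the monotonicity of $\alpha$ (both standard consequences of the Galois connection) to derive the adjoint form
\[
\alpha \circ f \mathrel{\sh{\sqsubseteq}} \sh{f} \circ \alpha.
\]
Evaluating this at $x = \gfp(f)$ and using $f(\gfp(f)) = \gfp(f)$ gives $\alpha(\gfp(f)) \mathrel{\sh{\sqsubseteq}} \sh{f}(\alpha(\gfp(f)))$, so $\alpha(\gfp(f))$ is a post-fixed point of $\sh{f}$. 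Knaster--Tarski applied in the abstract lattice then yields $\alpha(\gfp(f)) \mathrel{\sh{\sqsubseteq}} \gfp(\sh{f})$, and the Galois equivalence $\alpha(x) \mathrel{\sh{\sqsubseteq}} y \iff x \sqsubseteq \gamma(y)$ rewrites this as $\gfp(f) \sqsubseteq \gamma(\gfp(\sh{f}))$.

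The main obstacle is the gfp case: the direct mimicry of the lfp argument fails, and one has to pivot to the adjoint formulation of soundness and use the post-fixpoint (rather than pre-fixpoint) characterisation of the greatest fixpoint. Deriving $\alpha \circ f \mathrel{\sh{\sqsubseteq}} \sh{f} \circ \alpha$ from the $\gamma$-phrased soundness is the one nontrivial step; I would carefully check each inequality against the defining property of the Galois connection to avoid a direction slip.
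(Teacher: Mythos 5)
Your proof is correct. Note that the paper itself does not prove this theorem --- it defers to the reference [CH78] ("Proofs of the following theorems can be found in...") --- so there is no in-paper argument to compare against; what you give is the standard one. Both halves check out: the lfp case via exhibiting $\gamma(\lfp(\sh{f}))$ as a pre-fixed point of $f$, and the gfp case via the adjoint form $\alpha \circ f \mathrel{\sh{\sqsubseteq}} \sh{f} \circ \alpha$, which does follow from soundness together with $x \sqsubseteq \gamma(\alpha(x))$, monotonicity of $f$, and the Galois equivalence (concretely, $f(x) \sqsubseteq f(\gamma(\alpha(x))) \sqsubseteq \gamma(\sh{f}(\alpha(x)))$, then transpose). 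Your diagnosis that the naive mimicry of the lfp argument only bounds $\lfp(f)$ is accurate, and pivoting to post-fixed points in the abstract lattice is the right fix. One caveat worth flagging: you rely on the characterisation $\gfp(g) = \bigsqcup\{x \mid x \sqsubseteq g(x)\}$, which is the correct form of Knaster--Tarski, but the paper's own statement of that theorem writes $\gfp(f) = \bigsqcup\{x \mid f(x) \sqsubseteq x\}$ (pre-fixed points), which is a typo; your argument implicitly uses the corrected version, as it must.
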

\noindent
Intuitively, this theorem gives a process to compute an over-approximation by
Abstract Interpretation, as shown in Fig~\ref{fig:algo_ia}.
\begin{figure}
\begin{center}
  \includegraphics[width=10cm,height=7cm]{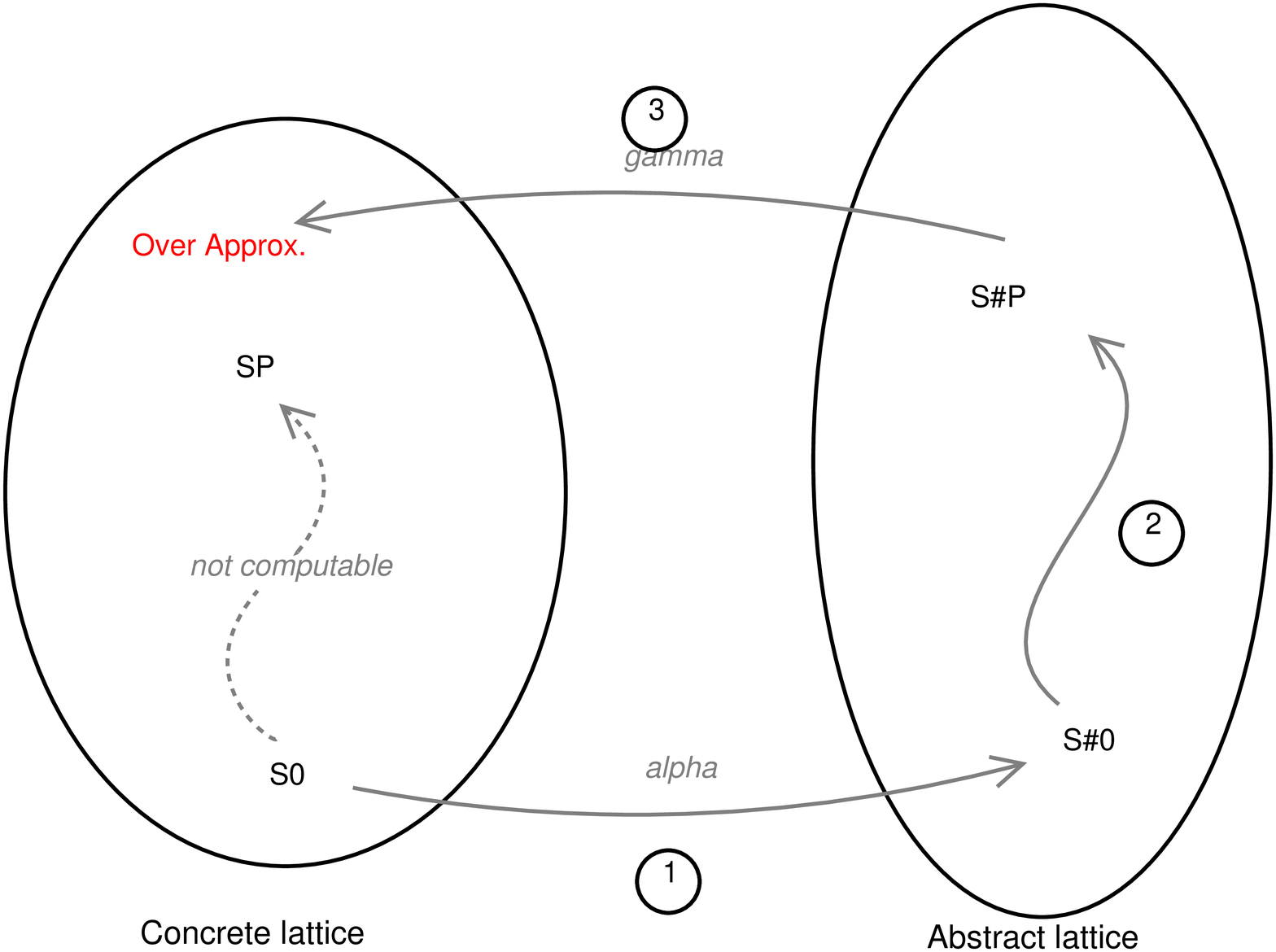}
\end{center}
\caption{Static approximations of fixed point computations in complete lattices}
\label{fig:algo_ia}
\end{figure}
The left part shows the concrete lattice where the concrete computation of $f$
is performed starting from initial state $S_0$. The right part shows the
abstract lattice that is used to over-approximate the computation.
This computation is undertaken in three steps:
\begin{itemize}
\item initial state abstraction;
\item fixpoint computation in the abstract lattice;
\item result concretization.
\end{itemize}

\paragraph{Without Galois connection}

When the abstract lattice is not complete, there does not exist necessarily a best
abstraction for all elements of the concrete lattice. The notion of Galois connection is
no more available and the abstract lattice is just linked with the concrete lattice
through a monotonic function $\gamma: \sh{\mathcal{D}} \rightarrow \mathcal{D}$.
The definition of sound approximation needs to be adapted:
\begin{definition} [Sound approximation without a Galois connection]
  Let $(\mathcal{D}, \sqsubseteq)$ and
  $(\mathcal{D}^\sharp, \sqsubseteq^\sharp)$ be two posets, let 
  $\gamma : \mathcal{D}^{\sharp} \rightarrow \mathcal{D}$ be a monotonic function and 
  $f : \mathcal{D} \rightarrow \mathcal{D}$ a function, then the function
  $\sh{f} : \sh{\mathcal{D}} \rightarrow \sh{\mathcal{D}}$ is a sound approximation of $f$ iff
$$
\forall x \in \mathcal{D}^{\sharp}, f \circ \gamma (x) \sqsubseteq \gamma \circ f^\sharp(x)
$$
\end{definition}
\noindent
In such an (not complete) abstract lattice, nothing guarantees the existence of the least fix point:
$\lfp(f)$ is not necessarily approximated by $\lfp(\sh{f})$. However, any fix point of $\sh{f}$ can be
used:
\begin{theorem}
  Let $(\mathcal{D}, \sqsubseteq, \bigsqcup,
  \bigsqcap)$ be a complete lattice, and $(\mathcal{D}^\sharp, \sqsubseteq^\sharp)$ be a poset, let
  $\gamma :\sh{\mathcal{D}} \rightarrow
    \mathcal{D}$, $f : \mathcal{D} \rightarrow \mathcal{D}$ and $\sh{f}
    : \sh{\mathcal{D}} \rightarrow \sh{\mathcal{D}}$ be three monotonic functions then
if $\sh{f}$ is a sound approximation of $f$, then we have:
\begin{eqnarray*}
\forall x \in \mathcal{D}^\sharp, \sh{f}(x) = x \implies \lfp(f) \sqsubseteq \gamma(x)
\end{eqnarray*}
\end{theorem}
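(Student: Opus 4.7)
The plan is to reduce the claim to a pre-fixpoint argument in the concrete complete lattice, since that is the only side on which the Knaster--Tarski theorem is directly available (the abstract side is only a poset). The key observation is that $\lfp(f)$ is characterised by Knaster--Tarski as
\[
\lfp(f) \;=\; \bigsqcap\{\,y \in \mathcal{D} \mid f(y) \sqsubseteq y\,\},
\]
so it suffices to exhibit $\gamma(x)$ as a pre-fixpoint of $f$ in $\mathcal{D}$, i.e.\ to show $f(\gamma(x)) \sqsubseteq \gamma(x)$; the desired inequality then follows from the universal property of $\bigsqcap$.

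The key step is to combine the two hypotheses. Applying the soundness condition of the definition immediately preceding the theorem to the element $x \in \sh{\mathcal{D}}$, we get $f \circ \gamma(x) \sqsubseteq \gamma \circ \sh{f}(x)$. Since by assumption $x$ is a fixpoint of $\sh{f}$, namely $\sh{f}(x) = x$, the right-hand side rewrites as $\gamma(x)$. Thus
\[
f(\gamma(x)) \;\sqsubseteq\; \gamma(\sh{f}(x)) \;=\; \gamma(x),
\]
so $\gamma(x)$ indeed belongs to the set $\{y \in \mathcal{D} \mid f(y) \sqsubseteq y\}$.

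It then remains only to invoke the definition of $\bigsqcap$ as a greatest lower bound: any element of a set is above the infimum of that set, hence
\[
\lfp(f) \;=\; \bigsqcap\{\,y \in \mathcal{D} \mid f(y) \sqsubseteq y\,\} \;\sqsubseteq\; \gamma(x),
\]
which is the conclusion. Note that the monotonicity assumptions on $f$ and $\sh{f}$ are used implicitly to justify the existence of $\lfp(f)$ through Knaster--Tarski, while monotonicity of $\gamma$ is not actually needed in this particular argument (it would be used only if one wanted, e.g., to transport fixpoint iterations from $\sh{\mathcal{D}}$ to $\mathcal{D}$).

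The main potential obstacle is a conceptual one rather than a technical one: in the absence of a Galois connection there is no $\alpha$, no best abstraction, and no guarantee that $\lfp(\sh{f})$ exists in $\sh{\mathcal{D}}$. The argument sidesteps this entirely by working only with the concrete infimum and using the abstract fixpoint $x$ merely as a witness via $\gamma$; this is exactly what makes the weakened statement of the theorem (any post-fixpoint of $\sh{f}$ yields a sound over-approximation) correct, and it is the point that needs to be highlighted in the proof.
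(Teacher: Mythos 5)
Your proof is correct: the paper itself gives no argument for this theorem (it defers to the cited reference \cite{CH78}), and your reduction --- soundness at $x$ plus $\sh{f}(x)=x$ gives $f(\gamma(x)) \sqsubseteq \gamma(\sh{f}(x)) = \gamma(x)$, so $\gamma(x)$ is a pre-fixpoint of $f$ and Knaster--Tarski's characterisation $\lfp(f) = \bigsqcap\{y \mid f(y) \sqsubseteq y\}$ closes the argument --- is exactly the standard proof, including the correct observation that monotonicity of $\gamma$ is only needed if one weakens the hypothesis from $\sh{f}(x)=x$ to $\sh{f}(x) \sqsubseteq^\sharp x$. The only blemish is terminological: you call $\gamma(x)$ a \emph{pre-fixpoint} in the body but a \emph{post-fixpoint} in the closing remark; pick one convention for the condition $f(y) \sqsubseteq y$ and use it consistently.
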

\noindent
Next theorem is useful to compute an over-approximation of $\gfp(f)$ when the lattice is not complete:
\begin{theorem}
\label{theo:gfp_incomplet}
  Let $(\mathcal{D}, \sqsubseteq, \bigsqcup,
  \bigsqcap)$ be a complete lattice, let $(\mathcal{D}^\sharp, \sqsubseteq^\sharp)$ be a poset with
  a greatest element $\top$ and let
  $\gamma : \sh{\mathcal{D}} \rightarrow \mathcal{D}$, $f :
  \mathcal{D} \rightarrow \mathcal{D}$ and $\sh{f} : \sh{\mathcal{D}}
  \rightarrow \sh{\mathcal{D}}$ be three monotonic functions, then
  \\
  if $\sh{f}$ is a sound approximation of $f$ and $a$ is an
  element of $\sh{\mathcal{D}}$ such that there exists $k$ such as $a =
  {\sh{f}}^k(\top)$, then
$$
\gfp(f) \sqsubseteq \gamma(a)
$$
\end{theorem}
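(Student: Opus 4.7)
My plan is to prove by induction on $k$ that $\gfp(f) \sqsubseteq \gamma(\sh{f}^k(\top))$ for every $k \in \mathbb{N}$; applied to the given value of $k$ this is exactly the claim. The two ingredients I rely on are the fixpoint identity $\gfp(f) = f(\gfp(f))$, which is Knaster-Tarski applied to the monotone $f$ on the complete lattice $\mathcal{D}$, and the soundness inequality $f \circ \gamma(x) \sqsubseteq \gamma \circ \sh{f}(x)$ holding for every $x \in \sh{\mathcal{D}}$. The monotonicity of $f$, $\sh{f}$ and $\gamma$ will let me propagate inequalities through composites.

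For the base case $k = 0$ the goal reduces to $\gfp(f) \sqsubseteq \gamma(\top)$. Since $\top$ is the greatest element of $\sh{\mathcal{D}}$, the intended meaning of $\gamma(\top)$ in this framework is the greatest element of the concrete lattice (or at least an upper bound of every concrete state, and in particular of $\gfp(f)$), so the base case is immediate. For the inductive step, assume $\gfp(f) \sqsubseteq \gamma(\sh{f}^k(\top))$ and apply the monotone $f$ to both sides:
\[
\gfp(f) \;=\; f(\gfp(f)) \;\sqsubseteq\; f\bigl(\gamma(\sh{f}^k(\top))\bigr) \;\sqsubseteq\; \gamma\bigl(\sh{f}^{k+1}(\top)\bigr),
\]
where the first equality is the fixpoint identity and the final step invokes soundness at the point $\sh{f}^k(\top) \in \sh{\mathcal{D}}$. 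This closes the induction.

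The only genuine obstacle is the base case. The hypotheses as stated ensure that $\top$ is the greatest element of the abstract poset, but not directly that $\gamma(\top)$ dominates $\gfp(f)$ in the concrete lattice; in the usual setting of abstract interpretation this holds because $\gamma(\top) = \top_{\mathcal{D}}$ by design of the abstraction, and I would either import that identity as an implicit convention of the framework or add it as an explicit premise of the theorem. Once the base case is granted, the inductive step is a straightforward composition of monotonicity of $f$, soundness of $\sh{f}$, and the Knaster-Tarski fixpoint identity for $\gfp(f)$.
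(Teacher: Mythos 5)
Your proof is correct, and it is worth noting that the paper does not actually supply its own proof of this theorem: it defers all the theorems of Section~2 to the cited reference on the polyhedral domain, so there is nothing in the text to compare against. Your induction on $k$ is the standard argument: the step
$\gfp(f) = f(\gfp(f)) \sqsubseteq f(\gamma(\sh{f}^k(\top))) \sqsubseteq \gamma(\sh{f}^{k+1}(\top))$
uses exactly the three available ingredients (the Knaster--Tarski fixpoint identity, monotonicity of $f$, and soundness at the point $\sh{f}^k(\top)$), and it has the virtue of requiring only monotonicity rather than continuity, which matches the theorem's hypotheses. The reservation you raise about the base case is legitimate and not a defect of your argument: as literally stated the theorem does not guarantee $\gfp(f) \sqsubseteq \gamma(\top)$. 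Indeed one can build counterexamples --- take $\gamma$ constantly equal to $\bot_{\mathcal{D}}$ and $f$ any monotonic map with $f(\bot_{\mathcal{D}}) = \bot_{\mathcal{D}}$ but $\gfp(f) \neq \bot_{\mathcal{D}}$; soundness holds vacuously yet the conclusion fails. The intended reading, consistent with the surrounding discussion (``one selects an element of the abstract lattice that over-approximates the initial state''), is that $\gamma(\top)$ is the top of the concrete lattice, and with that implicit convention made explicit your proof is complete.
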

\noindent
Consequently, when the abstract lattice is not complete, instead of
abstracting the initial state, one selects
an element of the abstract lattice that over-approximates the initial state.
And, a fix point is computed in the abstract lattice from this element.
The fix point is still an over-approximation of the concrete semantics.

\subsection{Examples of abstract domains}
\label{sec:ia:exemples}

In this section, we briefly describe two abstract domains:
the Interval \cite{CC76} and the Polyhedral \cite{CH78} domains. 

\subsubsection{The Interval abstract domain}
\label{intervals}
Interval analysis aims at approximating a set of values by an interval of possible values.
If $\mathcal{I}_{\mathbb{N}} = \{[a,b]~|~a,b \in \mathbb{N} \cup
\{-\infty, +\infty\}\}$, then the Interval abstract domain is the Cartesian product 
$\mathcal{I}_{\mathbb{N}} \times ... \times \mathcal{I}_{\mathbb{N}}$ equipped with
inclusion, union and intersection over intervals. This abstract domain is a complete lattice.

\noindent
State abstraction is performed by computing an interval that over-approximates the set of possible
values for each variable. If the concrete state is an unbounded set of tuples
$\{(x_{11},\ldots,x_{n1}),(x_{12},\ldots,x_{n2}),\ldots\}$ then:
\begin{multline*}
\alpha(\{(x_{11},\ldots,x_{n1}),(x_{12},\ldots,x_{n2}),\ldots\}) =
([m_1,M_1],\ldots,[m_n,M_n]) \\ ~\textrm{with}~m_i= \left\{ \begin{array}{ll} min_j(x_{ij})
  ~\textrm{if it exists} \\ -\infty~\textrm{else} \end{array} \right.~\textrm{and}~M_i = \left\{ \begin{array}{ll} max_j(x_{ij})
  ~\textrm{if it exists} \\ +\infty~\textrm{else} \end{array} \right.
\end{multline*}
The concretization of an abstract state is obtained by computing the Cartesian product of the intervals.
These functions define a Galois connection between the concrete domain and the abstract domain of intervals.

\noindent
The approximation of transfert functions is realized by using their structure and classical results from Interval Analysis \cite{Moo66}.
For example, functions $[x \leq y]$ and $[x=y+z]$ are abstracted by the following (sound) approximations:
$\sh{[x \leq y]}: ([a,b],[c,d]) \rightarrow ([a,min(b,d)],[max(a,c),d]))$ and 
$\sh{[x=y+z]}: ([a,b], [c,d], [e,f]) \rightarrow ([c+e,d+f] \sh{\cap} [a,b], [a-f,e-b] \sh{\cap} [c,d],[a-d, c-b] \sh{\cap} [e,f])$.    

\subsubsection{The Polyhedral abstract domain}
\label{polyhedra}

In Polyhedral analyses, each concrete state is abstracted by a conjunction of linear
constraints that defines a convex polyhedron. Indeed, a {\it convex polyhedron} is a region of an
n-dimensional space that is bounded by a finite set of hyperplanes ${x \in \R^n | ax \geq c}$ where $a \in \R^n$ and $c \in \R$.
 The abstract lattice equiped with
inclusion, convex hull\footnote{The union of two polyhedra is not a polyhedron, this is the reason why convex hull or any relaxation of it
must be employed.}, and intersection of polyhedra is not a complete lattice as there is no 
upper bound to the convex union of all the convex polyhedra that can be written in a circle.    

\noindent
Abstract functions can be defined to deal with polyhedra. For example:
\begin{eqnarray}
\sh{[x \geq y]}(\{z \leq  x + y\}) &=&  \{z \leq x + y, y \leq x\} \label{eq:poly_1}\\
\sh{[x > y ]}(\{x \leq y \}) &=& \{0 = 1\} \label{eq:poly_2}\\
\sh{[x = y * z]}(\{1 \leq y \leq 10\}) &=& \{x \leq z, x \leq 10*z\} \label{eq:poly_3}
\end{eqnarray}
\noindent 
If the expression is a linear condition, then it is just added to the polyhedron (case~\ref{eq:poly_1}). 
If the expression is contradictory with the current polyhedron, then it is reduced to $1=0$ meaning that
there is no abstract (and concrete) state in the approximation (case~\ref{eq:poly_2}). 
If the expression is non-linear, then a linear approximation is derived when available
and added to the polyhedron (case~\ref{eq:poly_3}). 
%Note that any non-linear expression can be approximated
%if necessary by the useless $1=1$ constraint.

\section{Filtering consistencies as abstract domain computations}
\label{sec:absdom-cons}
As noticed by Apt \cite{Apt99a}, constraint propagation algorithms can be seen as
instances of algorithms that deal with chaotic iteration. In this context, chaotic means fair application of
propagators until saturation. In this section, we elaborate on a bridge between two
unrelated notions: filtering consistencies and abstract domains. In particular, we show that arc-- and 
bound-- consistency are instances of chaotic iterations over two distinct 
abstract domains. Classical AI notions of sound approximation and abstract domain
computations, not used in \cite{Apt99a}, allows to show that filtering consistencies compute sound 
over-approximations of the solutions set of a constraint system. Thanks to the bridge, we also propose new filtering
consistency algorithms based on the polyhedral abstract domain. 

\subsection{Notations}
Let $\Z$ be the set of integers and $\cal V$ be a finite set of integer variables, where each variable $x$ in $\cal V$ is
associated with a finite domain $D(x)$. The {\it domain} 
$\mathcal{D}$ is the Cartesian product of each variable domain:~$D(x_1) \times \ldots \times D(x_m)$ and
$\mathcal{P(D)}$ denotes the powerset of $\mathcal{D}$. $inf_{\mathcal{D}} x$ and $sup_{\mathcal{D}} x$ denote
respectivelly the inferior and the superior bounds of $D(x)$ in $\mathcal{D}$.
A constraint $c$ is
a relation between variables of $\cal V$. The language of (elementary) constraints is built over arithmetical
operators $\{+,-,*,\,...\}$ and relational operators $\{<,\leq,>,\geq, =, \neq, ...\}$ but any relation over a subset
of $\cal V$ can be considered. Let {\it vars}$(c)$ be the function that returns the variables of $\cal V$ 
appearing in a constraint $c$. A {\it valuation} $\sigma$
is a mapping of variables to values, noted $\{x_1 \mapsto d_1, ...,x_n \mapsto d_n \}$. $CS$ denotes a constraint system $CS$, i.e., 
a finite set of constraints.

\subsection{Exact filtering}

Let $\{c_1,..,c_m\}$ be a $CS$ over $\{x_1,..,x_n\}$ and let $\mathcal{D} = D(x_1) \times .. \times D(x_n)$, then
the solution-set of $CS$ is an element of $\mathcal{P(D)}$, noted $sol(CS)$.

\noindent
The {\it exact filtering operator} of a constraint $c_i$ is computed with the function
$f_i : \parties{\mathcal{D}} \rightarrow \parties{\mathcal{D}}$ which maps an element $S \in \parties{\mathcal{D}}$ to 
$f_i(S) = \{s~|~ s\in S \land c_i(s)\}$.
The exact filtering operator of $c_i$ removes {\it all} the tuples of $\mathcal{D}$ that violate $c_i$.
Hence, by using an iterating procedure, it permits to compute $sol(CS)$:
if $f_C = f_1 \circ \ldots \circ f_m$ then $sol(C) = \gfp(f_C)$.
By noticing that $f_C$ is continuous (as each $f_i$ is continuous) and monotonic and thanks to Theorem~\ref{theo:limites} 
we get $sol(CS) = \lim_{k \rightarrow +\infty} {f_C}^k(\mathcal{D})$. 
%As $\forall S \in \parties{\mathcal{D}} , f_C(f_C(S)) = f_C(S)$, a single iteration is required to get $sol(C)$. 
\begin{example}
Consider $CS=\{x \neq y, y \neq z, z \neq x\}$ where $x \in 1..2, y \in 1..2, z \in 1..2$. The exact
filtering operator associated with $x \neq y$ will remove the tuples $(1,1,1),(1,1,2),(2,2,1),(2,2,2)$ from 
$\{1,2\} \times \{1,2\} \times \{1,2\}$. Iterating over all the constraints of $CS$ will eventually exhibit the inconsistency of this example.
\end{example}
\noindent
In fact, this shows that exact filtering of a CS over $D(x_1) \times .. \times D(x_n)$ 
can be reached if one computes over a complete lattice built over the set of possible valuations: 
$(\mathcal{P}(D(x_1) \times .. \times D(x_n)), \subseteq, \bigcup,\bigcap)$. 
This lattice will be called the {\it concrete lattice} in the rest of the paper.
Of course, computing over the concrete lattice is usually unreasonable, as it requires to examine every tuple of the Cartesian product 
$D(x_1) \times .. \times D(x_n)$ w.r.t. consistency of each constraint.
 
\subsection{Domain-consistency filtering}

For binary constraint systems, the most successful local consistency filtering is arc-consistency, 
which ensures that every value in the domain of one variable has a support in the domain of the other variable.
The standard extension of arc-consistency for constraints of more than two variables is domain-consistency (also called
hyper-arc consistency \cite{MS98}).
Roughly speaking, the abstraction that underpins domain-consistency filtering aims at considering
each variable domain separately, instead of considering the Cartesian product of each individual domain. More
formally,
\begin{definition}[Domain-consistency]
A domain $\mathcal{D}$ is domain-consistent for a constraint $c$ where $vars(c)=\{x_1,..,x_n\}$ 
iff for each variable $x_i$, $1 \leq i \leq n$ and for each $d_i \in D(x_i)$
there exist integers $d_j$ with $d_j \in D(x_j)$, $1 \leq j \leq n, j \neq i$
such that $\sigma = \{x_1 \mapsto d_1, ..,x_n \mapsto d_n\}$ is an integer solution of $c$.
\end{definition} 
\noindent
Consider the domains $\mathcal{D}=D(x_1) \times .. \times D(x_n)$ and  
$\sh{\mathcal{D}}_{arc} = \mathcal{P}(D(x_1)) \times \ldots \times \mathcal{P}(D(x_n))$ and
the abstraction function $\alpha_{arc} : \parties{\mathcal{D}} \rightarrow
\sh{\mathcal{D}}_{arc}$ which maps $S \in \parties{\mathcal{D}}$ to
$$ \alpha_{arc}(S) = (\{x_1~|~x \in S\}, \ldots, \{x_n~|~ x \in S\})$$

\noindent
The concretization function is a function $\gamma_{arc} : \sh{\mathcal{D}}_{arc} \rightarrow \parties{\mathcal{D}}$ such that
$$\gamma_{arc}((S_1,\ldots,S_n)) = S_1 \times \ldots \times S_n $$
\noindent
If $\sh{\sqsubseteq}_{arc}$, $\sh{\bigsqcup}_{arc}$ and  $\sh{\bigsqcap}_{arc}$ denote respectivelly the
inclusion, union and intersection of two tuples of sets, then we got the following Galois connection:
\begin{center}
$
(\parties{\mathcal{D}}, \subseteq, \bigcup, \bigcap) \overset{\gamma_{arc}}{\underset{\alpha_{arc}}{\leftrightarrows}} 
(\sh{\mathcal{D}}_{arc}, \sh{\sqsubseteq}_{arc}, \sh{\bigsqcup}_{arc}, \sh{\bigsqcap}_{arc})
$
\end{center}
\noindent
The proof follows comes the monotonicity of the projection and Cartesian product.
From Theorem~\ref{theo:approx}, we get:
\begin{definition}
The best sound approximation of the exact filtering operator $f_{i}$ is 
$$ \sh{f_{i\_arc}} \Def \alpha_{arc}\circ f_i \circ \gamma_{arc} $$
\end{definition}

\begin{theorem}
\label{theo:arc}
  Let $p$ be a filtering operator associated with constraint $c_i$, then $p$ computes domain-consistency iff
  $p = \sh{f_{i\_arc}}$. 
\end{theorem}
\noindent
This theorem implies that domain-consistency is the strongest property that can be guaranteed by a filtering 
operator using the abstraction $\alpha_{arc}$. A proof is given in the Appendix of the paper.

\noindent
Let us consider now the function $\sh{f}_{arc}$ such that 
$\sh{f}_{arc} = \sh{f_{1\_arc}} \circ \ldots \circ \sh{f_{n\_arc}}$.
As $\sh{f}_{arc}$ is a sound approximation of $f_C$ then
$$ sol(C) = \gfp(f_C)  \subseteq \gamma_{arc}(\gfp(\sh{f}_{arc})) $$

\noindent
This result shows if necessary that constraint propagation over domain-consistency filtering 
operators computes an over-approximation of the solution set of $C$.

\subsection{Bound-consistency filtering}
\label{sec:bornes}

Following the same scheme, AI can be used to show the abstraction that underpins
constraint propagation with bound-consistency filtering (also called interval-consistency). But, firstly, let
us recall the definition of bound-consistency we consider in this paper, as several definitions exist in the 
literature \cite{CHL06}~:
\begin{definition}[Bound-consistency]
A domain $\mathcal{D}$ is bound-consistent for a constraint $c$ where $vars(c)=\{x_1,..,x_n\}$ 
iff for each variable $x_i$, $1 \leq i \leq n$ and for each $d_i \in \{inf_{\mathcal{D}} x_i, sup_{\mathcal{D}} x_i\}$
there exist integers $d_j$ with $inf_{\mathcal{D}} x_j\leq d_j \leq sup_{\mathcal{D}} x_j$, $1 \leq j \leq n, j \neq i$
such that $\sigma = \{x_1 \mapsto d_1, ..,x_n \mapsto d_n\}$ is an integer solution of $c$.
\end{definition}
\noindent 
Roughly speaking, this approximation considers only the bounds of the domain of each variable and
approximates each domain with an interval. 
Let $\mathcal{I}(S)= [min(S),max(S)]$ be the smallest interval that contains all the elements of a finite set of integers $S$.
Similarly, $\mathcal{I}^{-1}(I)$ denotes the set of integers of an interval $I$~: $\mathcal{I}^{-1}([a,b]) = \{x \in
\mathbb{Z}~|~a\leq x \leq b\}$. 

\noindent
The abstract domain we consider for bound-consistency is 
$\sh{\mathcal{D}}_{bound} = \mathcal{I}(\mathcal{P}(D(x_1))) \times
\ldots \times \mathcal{I}(\mathcal{P}(D(x_n)))$.

\noindent
Given a tuple of sets $(S_1,\ldots,S_n)$ and a tuple of intervals
$(I_1, \ldots, I_n)$, we consider the functions 
$\alpha_{inter}$ and $\gamma_{inter}$ such that:
\begin{eqnarray*}
&&\alpha_{inter}(S_1,\ldots,S_n) = (\mathcal{I}(S_1), \ldots, \mathcal{I}(S_n)) \\
&&\gamma_{inter}(I_1, \ldots I_n) = (\mathcal{I}^{-1}(I_1), \ldots, \mathcal{I}^{-1}(I_n))
\end{eqnarray*}

\noindent
Let $\alpha_{bound} : \parties{\mathcal{D}} \rightarrow \sh{\mathcal{D}}_{bound}$ be an abstraction function such that 
$$ \alpha_{bound} = \alpha_{inter} \circ \alpha_{arc}$$
and $\gamma_{bound} : \sh{\mathcal{D}}_{bound} \rightarrow
\parties{\mathcal{D}}$ be a concretization function such that 
$$ \gamma_{bound} = \gamma_{arc} \circ \gamma_{inter}$$

\noindent
If $\sh{\sqsubseteq}_{bound}$, $\sh{\bigsqcup}_{bound}$ and
$\sh{\bigsqcap}_{bound}$ respectively denote inclusion, union and
intersection of intervals (component by component) then we get the following Galois connection:

\begin{center}
$
(\parties{\mathcal{D}}, \subseteq, \bigcup, \bigcap) \overset{\gamma_{bound}}{\underset{\alpha_{bound}}{\leftrightarrows}} 
(\sh{\mathcal{D}}_{bound}, \sh{\sqsubseteq}_{bound}, \sh{\bigsqcup}_{bound}, \sh{\bigsqcap}_{bound})
$
\end{center}

\noindent
Let $\sh{f_{i\_bound}}$ be the most accurate sound approximation of $f_i$, then we get:

\begin{eqnarray*}
\sh{f_{i\_bound}} &=& \alpha_{bound} \circ f_i \circ \gamma_{bound} \\
& =& \alpha_{inter} \circ \sh{f_{i\_arc}} \circ \gamma_{inter} 
\end{eqnarray*}

\begin{theorem}
 If $p$ is a filtering operator associated to constraint $c_i$, then  $p$ computes 
 bound-consistency iff $p = \sh{f_{i\_bound}}$. 
\end{theorem}

\noindent
This theorem, proved in Appendix, implies that bound-consistency is the strongest property that can be reached with an operator
based on the $\alpha_{bound}$ abstraction.

\noindent
Consider now the function $\sh{f}_{bound}$ such that
$\sh{f}_{bound} = \sh{f_{1\_bound}} \circ \ldots \circ \sh{f_{n\_bound}}$. As
$\sh{f}_{bound}$ is a sound approximation of $f_C$, then
$$ sol(C) = \gfp(f_C)  \subseteq \gamma_{bound}(\gfp(\sh{f}_{bound})) $$
This result shows if necessary that constraint propagation based on bound-consistency computes
a sound over-approximation of the solution set of $C$.
In addition, as $\sh{f}_{bound}$ is also a sound over-approximation of $\sh{f}_{arc}$, then 
$$ \gamma_{arc}(\gfp(\sh{f}_{arc})) \subseteq \gamma_{bound}(\gfp(\sh{f}_{bound})) $$
meaning that filtering with bound-consistency provides an over-approximation of the results given by a 
filtering with domain-consistency.
 
\subsection{New filtering consistencies based on abstract domains}
\label{sec:DynaLIB_global}
In the previous section, classical filtering consistencies are interpreted in terms
of abstract domain computations. In this section, we propose a
new filtering consistency based on the Polyhedral abstract domain \cite{CH78}. 
%We start by 
%recalling some basics statements on linear relaxation of constraint and then we present
%a partial consistency based on the Polyhedral abstract domain \cite{CH78}. 

\subsubsection{Linear relaxations}\label{sec:DLRs}

When non-linear constraints are involved in a constraint store, approximating them 
with linear constraints is natural in order to benefit from powerful Linear
Programming techniques. These techniques can be used to check the satisfiability 
of the constraint store when the approximation is sound. If the approximate constraint 
system is unsatisfiable so is the non-linear constraint system. But, in the context
of optimization problems, the approximation
can also be used to prune current bounds of the function to optimize. 
%This idea is the basis of the famous Branch and Bound algorithm. A first usage of this linear
%approximation of non-linear constraint was due to McCormick in 1976~\cite{McC76} for
%stimating the optimal solutions of non-convex programs. In the same Research domain,
%Al Khayyal and Falk in 1983~\cite{AF83} and Sherali and Adams in 1992~\cite{SA92},
%defined the ``Reformulation-Linearization'' technique. More recently, Refalo
%in~\cite{R99} proposed a cooperation mechanism named ``tight cooperation'' that 
%combines a linear programming solver with a constraint propagation solver for solving
%piecewise linear problems. In this approach, the convex hull of a piecewise linear
%function is dynamically computed by iterating between both solvers. Finally, in 2005, 
%Lebbah, Michel and Rueher proposed the same principle of dynamic linear relaxation for   
%quadratic constraint systems on continuous domains.

Another form of approximation comes from the domain in which the computation occurs.
A linear problem over integers can be relaxed in the domain of rationals or reals 
and solved within this domain. As the set of integers belongs to the rationals and reals,
an integer solution of the relaxed problem is also a solution of the original integer problem,
but the converse is false. In this paper, we will consider both kinds of approximations 
under the generic term of ``linear relaxations''.

Computing a linear relaxation of a constraint system $CS$ aims at finding a set of
linear constraints that characterizes an over-approximation of the solution set of
$CS$. It is not unique but for trivial reasons, we are more interested in the tighter 
possible relaxations. The tightest linear relaxation is the convex hull of the solution set
of $CS$ but computing this relaxation is as hard as solving $CS$. For $CS$ over finite domains,
the problem is therefore NP\_hard. 
Whenever a relaxation is computed by using the current bounds 
of variable domains, it is called {\it dynamic} and the consistencies presented in the rest
of the section are compatible with dynamic linear relaxations.

\subsubsection{Polyhedral-consistency filtering}

Let $Poly$ be the abstract domain of closed convex polyhedra with rational coefficients.
As said previously, $Poly$ is not a complete lattice, and then we cannot define
a Galois connection between $Poly$ and the lattice of the solutions.
Nevertheless, the concretization function  
$\gamma_{poly} : Poly \rightarrow \parties{\mathcal{D}}$ can be defined as the function that
returns the integer points of a given polyhedron:
$$ \gamma_{poly}(\sh{S}) = \textrm{int\_sol}(\sh{S}) $$
Here,  $\textrm{int\_sol}$ stands for the whole set of integer solutions of a set of linear constraints
As $\sh{S}$ is bounded, $\gamma_{poly}(\sh{S})$ is finite.

Without a Galois connection, we do not expect the polyhedral-consistency
proposed in this section to be optimal w.r.t. the abstract domain.
Hence, we only show that the filtering algorithm that computes this
consistency is a sound approximation of the exact filtering operator.

\begin{definition}
Let $\alpha_{box}$ be the following abstraction function\\
$\alpha_{box} : \sh{\mathcal{D}_{bound}} \rightarrow Poly$
such that
$$ \alpha_{box}(([a_1,b_1],\ldots,[a_m,b_m])) = \{a_1 \leq x_1 \leq b_1, \ldots, a_m \leq x_m \leq bm\} $$
and the concretization function $\gamma_{box} :  Poly \rightarrow \sh{\mathcal{D}_{bound}}$:
$$
\gamma_{box}(P) = \left\{ \begin{array}{l}([\lceil min(x_1,P) \rceil, \lfloor
  max(x_1,P) \rfloor],\ldots,[\lceil min(x_m,P) \rceil, \lfloor
  max(x_m,P) \rfloor]) \\\textrm{~~~~~~ if~}\forall i,  \lceil min(x_i,P) \rceil \leq \lfloor
  max(x_i,P) \rfloor \\
  \emptyset \textrm{ otherwise}
\end{array} \right.
$$
\end{definition}
\noindent
where $\lfloor x \rfloor$ (resp. $\lceil x \rceil$) stands for the next smallest (resp. largest) integer of $x$,
and $min(v,P)$ ( resp. $max(v,P)$) computes the smallest (resp. largest) value of $v$ corresponding to a point of $P$.

\noindent
Both $\alpha_{box}$ and $\gamma_{box}$ link the polyhedral abstract domain with the interval abstract domain.
The abstraction function $\alpha_{box}$ maps a set of intervals into a polyhedron by adding two inequalities per variable, while
the concretization function $\gamma_{box}$ maps a polyhedron into a set of intervals by computing first the smallest hypercuboid 
containing the polyhedron and second the greatest hypercuboid with integer bounds.
The behaviour of these two functions is illustrated in Fig. \ref{fig:galois_box}.

\begin{figure} 
  \psfrag{alphabox}{$\alpha_{box}$}
  \psfrag{gammabox}{$\gamma_{box}$}
  \begin{center} 
   \includegraphics[width=11cm,height=5cm]{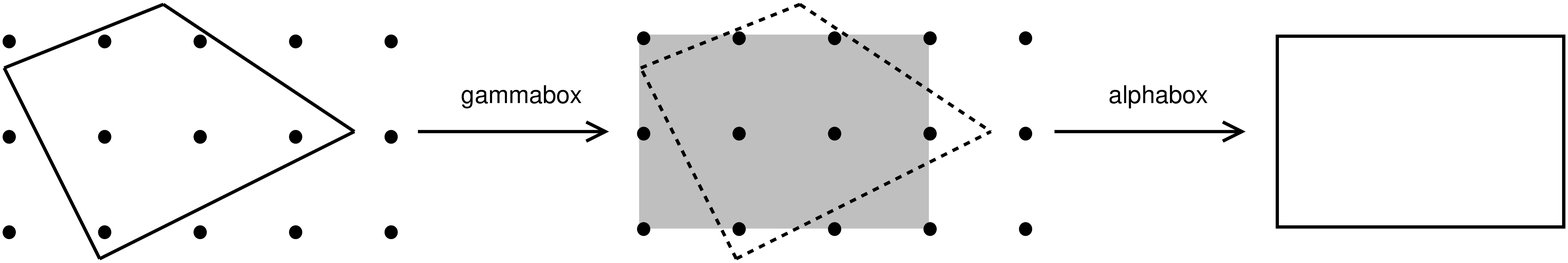} 
  \end{center} 
  \caption{Connection between the Polyhedral and Interval abstract domains} 
  \label{fig:galois_box} 
\end{figure}

\begin{definition}[Polyhedral-consistency]
A domain $\mathcal{D}$ is polyhedral-consistent for a constraint $c$ where $vars(c)=\{x_1,..,x_n\}$ 
iff for each variable $x_i$, $1 \leq i \leq n$ and for each $d_i \in \{inf_{\mathcal{D}} x_i, sup_{\mathcal{D}} x_i\}$
there exist rationals $r_j$ with $inf_{\mathcal{D}} x_j\leq r_j \leq sup_{\mathcal{D}} x_j$, $1 \leq j \leq n, j \neq i$
such that $\sigma = \{x_1 \mapsto r_1, ..,x_n \mapsto r_n\}$ is a (rational) solution of a linear relaxation of $c$.
\end{definition} 
\noindent
The rationale behind this definition is to benefit from efficient polyhedral techniques over the rationals to filter 
the variation domain of variables. Of course, interesting implementations of this filtering consistency should trade 
between efficiency and precision as integer linear constraint solving is costly (NP\_hard problem) even for bounded domains.  
It is worth noticing that the definition depends on the quality of the underlying linear relaxation. On the one hand, a
linear relaxation which over-approximate $c$ by $True$ (the whole search space) is useless while on the other hand 
a linear relaxation which exploits piecewise over-approximations of $c$ is often too costly. 
We give examples of polyhedral-consistency filtering in function of various linear relaxations.

\begin{example}
Consider the following $CS$: $z = x+y, z = x*y$, let $c$ be the second constraint of $CS$: $c = (z = x*y)$
and let $\mathcal{D}$ be $x \in -7 .. 10, y \in -7 .. 10, z \in 3 .. 10$.

\noindent
Note that $\mathcal{D}$ is bound-consistent for all the constraints of $CS$.

\noindent
The simplest linear relaxation that can be considered is the one that ignores non-linear constraints.
In this example, $c$ is over-approximated by $True$  and then $\mathcal{D}$ viewed as  
$x \geq -7, x \leq 10, y \geq -7, x \leq 10, z \geq 3, x \leq 10, z = x+y$ is then
polyhedral-consistant w.r.t. this linear relaxation. Note that this approach can be generalized by
associating a new fresh variable to the non-linear term $x*y$ with a domain computed using the
bounds $x$ and $y$. In this example, this does not help but it could help on other examples.

\noindent
Another linear relaxation consists in building a polyhedron from the ``bounds'' of $x*y$
in $\mathcal{D} = x \in -7 .. 10, y \in -7 .. 10, z \in 3 .. 10$. By considering the 2-dimensional 
polyhedron\\ 
$\{(1,10), (10,1), (-1,-7),(-7,-1)\}$ we get that a linear relaxation of $c$ in
domain $\mathcal{D}$ is\\ 
$11x - 8y +69 \geq 0$\\
$-x -y +11 \geq 0$\\
$-8x + 11y + 69 \geq 0$\\
$x + y + 8 \geq 0$\\  
Filtering with the polyhedral-consistency, we get that
$x \in -2 .. 9, y \in -2 .. 9, z \in 3 .. 10$ where $D(x)$ and $D(y)$ have been pruned.
These results can be easily computed using a Linear Programming tool and truncation operators. For example,
using the clpq library of SICStus Prolog which implements
a simplex over the rationals, the following request permits to compute the max bound of variable $x$:
\begin{verbatim} 
{X >= -7, X =< 10, Y >= -7, Y =< 10, Z >= 3, Z =< 10, Z = X+Y, 
11*X - 8*Y+ 69 >=0, -X - Y + 11 >= 0, -8*X + 11*Y +69 >= 0, 
X + Y + 8 >=0}, sup(X, R).

R = 179/19   % then max bound of x is 9  
\end{verbatim}

Finally, we can automate the computation of linear relaxations of $c$
by considering the following trivial constraints, which are always true for
any $x$ and $y$:
$(x-inf_{\mathcal{D}} x)(y - inf_{\mathcal{D}} y) \geq 0$\\
$(x-sup_{\mathcal{D}} x)(y - inf_{\mathcal{D}} y) \leq 0$\\
$(x-inf_{\mathcal{D}} x)(y - sup_{\mathcal{D}} y) \leq 0$\\
$(x-sup_{\mathcal{D}} x)(y - sup_{\mathcal{D}} y) \geq 0$\\

\noindent
By decomposing these constraints, using the original bounds of $x,y,z$ and 
replacing the quadratic term $x*y$ by $z$, we get:\\
$7x + 7y  +z +49   \geq 0$\\ 
$10x - 7y -z +70   \geq 0$\\
$-7x+10y  -z  +70  \geq 0$\\
$-10x -10y +z  +100 \geq 0$\\
Filtering with the polyhedral-consistency, we get that
$x \in -2 .. 9, y \in -2 .. 9, z \in 3 .. 10$ where $D(x)$ and $D(y)$ have been pruned.
These domains are still bound-consistent but 
another tighter relaxation can be computed with these new bounds:\\
$2X + 2Y  +Z + 4   = 0$\\
$9X -  2Y   -Z +18   = 0$\\
$-2X+  9Y  - Z +18   = 0$\\
$-9X -  9Y + Z  +81 = 0$\\
and then filtering again permits to get that
$x \in 0 .. 8, y \in 0 .. 8, z \in 3 .. 10$.
Here, filtering by bound-consistency leads to prune the domains to:
$x \in 1 .. 8, y \in 1 .. 8, z \in 3 .. 10$.
Then, by iterating these two process, we get the only solution to $CS$ which is:
$x \in 2 .. 2, y \in 2 .. 2, z \in 4 .. 4 $.
This showed how dynamic linear relaxations can be used to solve a non-linear $CS$.
\end{example}

\section{The \textit{w} constraint operator}

In this section, we present the {\it w} constraint operator which captures iterative computations, and how it is processed by a constraint solver. 
The constraint operator has been introduced a long time ago in \cite{GBR98,GBR00} and was further refined using Abstract Interpretation (AI) techniques \cite{DGD07a}. In the following, we recall its interface and semantics and show how fixed point computations can be used to filter 
inconsistant values of the underlying relation. We also explain
how the Polyhedral abstract domain is used to approximate the fixed point computations.

\subsection{\textit{w} as a relation over memory states}
The $w$ operator captures a relation over three memory states that represent the state before, within and after the execution of an iterating statement.
In this paper, we do not specify what a memory state is, or what the iterating statement is, as the approach is generic regarding the content of a memory state and the concrete syntax of the iterator. However, in order to ease the understanding, the reader can consider a memory state to be a mapping between variables of the program to values. More complex examples of memory states in relation with $w$ can be found in \cite{CBG09} and \cite{CG10}.

\noindent
The relation {\it w} is expressed with the following syntax:
$w(\M_1,\M_2,\M_3, Dec, Body)\}$ where $\M_1$ denotes the memory state before execution of the iteration, $\M_2$ denotes the memory state reached at the end of execution of the $Body$, while $\M_3$ denotes the state after execution, $Dec$ is a boolean syntactical expression, and $Body$ is a list of statements. This three-states consideration is inspired by the Static Single Assignment of a program \cite{WZ91}. If the state of $\M$ is irrelevant for a given computation, we simply write $\_$. Note that $Body$ may also contain other iterators, and thus $w$ is meant to be a compositional operator. The semantics of $w$ is the semantics of an iterating statement (i.e., repetitive application of $Body$ over an input state, while $Dec$ is true).

\noindent
%Let $\M,\M'$ be two memory states, we say that $\M$ is in relation with $\M'$ iff $w(\M,\M',Dec,Body)$ is true, meaning that state $\M'$ results from the application/execution of $w$ on $\M$. This relation is transitive, but neither reflexive nor symmetric. 
We note $w^n = \overbrace{w \circ w \ldots w}^n$ where $\circ$ is the application composition.

\subsection{Background on $w$}
As described in \cite{GBR00},
the operational semantics of $w$ within a constraint solver is expressed as a set of guarded-constraints: $\{(C_1 \longrightarrow C_2)_i\}_{1\leq i \leq n}$. If $C_1$ is entailed by the constraint store then $C_2$ is added to it, and the relation $w$ is solved. If $C_1$ is disentailed, then the guarded-constraint is discarded and no more considered in further analysis. Finally, if none of these (dis-) entailment deductions is possible, the guarded-constraint just suspends in the constraint store. The set of guarded-constraints is considered each time the constraint $w$ awakes in the constraint store, so that it captures the essence of the iteration through rewriting in recursive calls. In addition, substitution of variables must be considered to faithfully represent the constraints in a $w$ relation. $Dec_{\M_3 \leftarrow \M_1}$ simply denotes the constraint $Dec$ where program variables from $\M_3$ have been substituted by the variables from $\M_1$. With these notations, the $w$ relation is expressed as follows:\\
\( w(Dec, \M_1, \M_2, \M_3, Body)   \mbox{  iff  } \)\\ \\
\(\begin{array}{l l l}
    \bullet\ Dec_{\M_3 \leftarrow \M_1} & \longrightarrow   & Body_{\M_3 \leftarrow \M_1} \land w(Dec, \M_2,\M_{new},\M_3, Body_{\M_2 \leftarrow \M_{new}}) \\  
    \bullet\ \neg( Dec_{\M_3 \leftarrow \M_1}) &    \longrightarrow       & \M_3 = \M_1    \\
    \bullet\ \neg(Dec_{\M_3 \leftarrow \M_1} \land Body_{\M_3 \leftarrow \M_1} ) & \longrightarrow &  \neg( Dec_{\M_3 \leftarrow \M_1} ) \land \M_3 = \M_1 \\
    \bullet\ \neg(\neg Dec_{\M_3 \leftarrow \M_1} \land \M_3 = \M_1) &  \longrightarrow     &  Dec_{\M_3 \leftarrow \M_1} \land Body_{\M_3 \leftarrow \M_1} \land w(Dec,\M_2,\M_{new},\M_3, Body_{\M_2 \leftarrow \M_{new}}) \\
  \end{array}
\)\\
\(    \bullet\ join( Dec_{\M_3 \leftarrow \M_1} \land Body_{\M_3 \leftarrow \M_1} \land w(Dec,\M_2,\M_{new},\M_3, Body_{\M_2 \leftarrow \M_{new}}) ,  \neg(Dec_{\M_3 \leftarrow \M_1}) \land \M_3 = \M_1)
% \end{array}
\)\\
\noindent
The two former guarded-constraints implement forward analysis, by examining the entailment of $Dec$. Depending on the entailment of $Dec$, a recursive call to a new $w$ is added to the constraint store. The two followings implement backward reasoning by examining the differences between the stores after and before execution of the iteration. Finally,  the last operation, called $join$, is the most tricky one and implements union of stores in case of suspension of the operator. This $join$ operation is realized iff none of the previous guarded-constraints has been solved.
The rest of the Section is devoted to the presentation of this operator, which is implemented as an abstract operation over abstract domains.

\subsection{Concrete fixed point computation}
For a given $w$ operator, let $T$ be the following set:
\begin{displaymath}
T = \{ (\M_i,\M_j)~|~\exists k~|~w^k(\M_i, \M_j,\_, Dec, Body) \}
\end{displaymath}
% Let us call $S_i$ the possible values of the loop variables after $i$ iterations
% in a loop. When $i = 0$ possible variables values are the values that
% satisfy the domain constraint of $Init$ variables.  We call
% $S_{init}$ this set of values. Thus $S_0 = S_{init}$.  Let us call
% $T$ the following set:
% \begin{displaymath}
% T = \{((x_1,\ldots,x_n),(x'_1,\ldots,x'_n))~|~(x_1,\ldots,x_n) \in S_{init} \land \exists i (x'_1,\ldots,x'_n) \in S_i\}
% \end{displaymath}
%  $T$ is a set of pairs of lists of values $(l,m)$ such that
%  initializing variables of the loops with values $l$ and iterating
%  the loop a finite number of times produce the values $m$. The
%  following relation holds

\noindent
$T$ represents all pairs of memory states that are in relation through the {\it w} statement, but still, not all those pairs can be considered as solutions of the relation, as some pairs can only be reached in temporary states of the execution. For this reason, we introduce the set $Z_w$:

  \begin{eqnarray*}
  Z_w &=& \{(\M_i,\M_j)~|~ (\M_i,\M_j) \in T \land \M_j \in sol(\lnot Dec)\} 
  \end{eqnarray*}

\noindent
  where $sol(C)$ denotes the set of solutions of a constraint $C$.
%  The previous formula expresses that the solutions of the \textit{w}
%  constraint are the pairs of lists of values $(l,m)$ such that
%  initializing variables of the loops with values $l$ and iterating the
%  loop a finite number of times leads to some values $m$ that
%  violate the loop condition.

\noindent
$T$ can be seen as the {\it least fixed point} of:
\begin{eqnarray}
T^{i+1} &= & \{(\M_k,\M_j)~|~(T^i \land w(\M_k, \_, \M_j,Dec, Body))\} \cup T^i \label{eqn:concrete1}\\
T^0 &= &\{(\M_1,\M_1)\}  \label{eqn:concrete2} 
\end{eqnarray}
and $Z_w$ can be computed by filtering the pairs of the fixed point.
%$Cond$ and $Do$ are supposed to involve only $In$ and $Out$ variables.
%Thus, composing $T^k$ and $sol(Cond \land Do)$ is possible as they both
%are relations between two lists of variables of length $n$.

%Following the principles of abstract interpretation this fixed point
%can be computed by iterating Equation~\ref{eqn:concrete1} starting
%from the set $T^0$ of Equation~\ref{eqn:concrete2}.
\noindent
For instance, considering $\M_1 = x \mapsto 0 \vee x \mapsto 1 \vee x \mapsto 2 \vee x \mapsto 3$ and $w(\M_1,\M_2, \M_3, x < 2, x=x+1)$, and
using the notation $(0,0)$ for denotating $(x \mapsto 0, x \mapsto 0)$,
the fix point computation is as follows: 

\begin{eqnarray*}
T^0 &=& \{(0,0),(1,1),(2,2),(3,3)\} \\ 
T^1 &=& \{(0,1),(1,2)\} \cup T^0
    = \{(0,0),(0,1),(1,1),(1,2),(2,2),(3,3)\}\\
T^2 &=& \{(0,1),(0,2),(1,2)\} \cup T^1
    = \{(0,0),(0,1),(0,2),(1,1),(1,2),(2,2),(3,3)\}\\
T^3 &=& T^2     
\end{eqnarray*}

\noindent
Consequently, the solutions set $Z_w$ of $w(\M_1, \M_2, \M_3, x < 2, x=x+1)$ is:
\begin{eqnarray*}
  Z_w &=& \{(a,b)~|~(a,b) \in T^3 \land (x \mapsto b) \in sol(x \geq 2)\}  \\
   &=& \{(0,2),(1,2),(2,2),(3,3)\}
\end{eqnarray*}

\noindent
Computing $Z_w$ is undecidable in general as there is no termination guarantee of the iterating process. This is the reason why this computation is usually abstracted using abstract domain computation.

\subsection{Abstracting the fixed point computation}

Implementing the $join$ operator mentionned above can be done by 
abstracting the computation of the fixed point within the Polyhedral abstract domain.
%We compute an approximation of $T$ using the polyhedra abstract
%domain. Let $P$ be a polyhedron that over-approximates $T$, which
%means that all elements of $T$ are points of the polyhedron $P$.  Each
%list of values in the pairs defining $T$ has a length $n$ thus $P$
%involves $2n$ variables. 
Let $P^\sharp$ be a conjunction of linear restraints, the intersection of which defines a convex polyhedron, that over-approximates the set $T$.
Hence, we can compute $P^\sharp$ as the least fixed point of:
\begin{eqnarray}
P^{i+1} &=& \{(\M_k,\M_j)~|~(P^i \land \alpha_{poly}(~w(\M_k,\M_j,\_,Dec,Body))) \sqcup P^i \label{eqn1}\\
P^0 &=& \{(\alpha_{poly}((\M_1,\M_1))\label{eqn2}
\end{eqnarray}
Compared to eq.~\ref{eqn:concrete1} and~\ref{eqn:concrete2}, the
computation is realized in the abstract domain using $\alpha_{poly}$ the abstraction function of the Polyhedral abstract domain.

% $P_{L_1,L_2}$ denotes the projection of the linear constraints $P$
% over the set of variables in $L_1$ and $L_2$.  Projecting linear
% constraints on a set of variables $S$ consists in eliminating all
% variables not belonging to $S$.  Lists equality $L=M$ is a shortcut
% for $\forall i \in [1,n] L[i] = M[i]$, where $n$ is the length of the
% lists and $L[i]$ is the ith element of $L$. $P_1 \sqcup P_2$ denotes
% the weak join of polyhedron $P_1$ and $P_2$ presented in
% Section~\ref{polyhedra}.

% In Equation~\ref{eqn2}, $S_{init}$ is abstracted with the $\alpha$
% function. This function computes a relaxation of the whole constraint
% store and projects the result on $Init$ variables. 

\noindent
Let $Z^\sharp_w$ be the approximation of the set of solutions of \textit{w}, obtained by application of $\alpha_{poly}$:
\begin{eqnarray*}
  Z^\sharp_w &=& \{(\M_i,\M_j)~|~ (\M_i,\M_j) \in P^\sharp \land \M_j \in \alpha_{poly}(sol(\lnot Dec))\} 
  \end{eqnarray*}
%%%

\noindent
Looking at the above example where $\M$ is just composed of the mapping of $x \mapsto v$, 
it is worth introducing different representations of the stores as we progress in the fixed point computation. 
When $P^i$ is computed over $x_k$ and establishes a relation in between stores $\M_k$ and $\M_j$ that contains $x_j$, we note: $P^i(x_k,x_j)$. If $P^i$ is then
considered over $y_k,y_j$, then we will simply write $P^i(y_k,y_j)$ and apply variable substitution. 

\noindent
With these notations, we have the following computation: 
% We detail the abstract fixed point computation on the same example as in
% the previous section.  As the constraints $Cond$ and $Do$ are almost
% linear their relaxation is trivial: $Relax(Cond \land Do) = X_{in} \leq 1,
% X_{out} = X_{in} + 1$. $X_{in}$ is only constrained by its domain, thus 
% $\alpha(S_{init}) = X_{in} \geq 0 \land X_{in} \leq 3$.
% The fixed point is computed as follows
 \begin{eqnarray*}
 P^0(x_{in},x_{out}) &=&  x_{in} \geq 0 \land x_{in} \leq 3 \land x_{in} = x_{out} \\
 P^1(x_{in},x_{out}) &=& (P^0(x_{in},x_0) \land x_0 \leq 1 \land x_{out} = x_0 + 1)_{x_{in},x_{out}}
       \sqcup~P^0(x_{in},x_{out}) \\
    &=& (x_{in} \geq 0 \land x_{in} \leq 1 \land x_{out} = x_{in} + 1) \sqcup P^0(x_{in},x_{out}) \\
    &=& x_{in} \geq 0 \land x_{in} \leq 3  \land x_{out} \leq x_{in} + 1 \land  x_{out} \geq x_{in} \\
 P^2(x_{in},x_{out})  &=&  (P^1(x_{in},x_1) \land x_1 \leq 1 \land x_{out} = x_1 + 1)_{x_{in},x_{out}}
       \sqcup~P^1(x_{in},x_{out}) \\
    &=& (x_{in} \geq 0  \land x_{in} \leq  3 \land x_{in} \leq  x_{out} - 1) \sqcup P^1(x_{in},x_{out}) \\
    &=&  x_{in} \geq 0 \land x_{in} \leq 3  \land x_{out} \leq x_{in} + 2 \land  x_{out} \geq x_{in} \land x_{out} \leq 4\\
 P^3(x_{in},x_{out}) &=&  (P_2(x_{in},x_2) \land x_2 \leq 1 \land x_{out} = x_2 + 1)_{x_{in},x_{out}}
    \sqcup~P^2(x_{in},x_{out}) \\
 &=& (x_{in} \geq 0  \land x_{in} \leq  3 \land x_{in} \leq  x_{out} - 1) \sqcup P^2(x_{in},x_{out}) \\
 &=& P^2(x_{in},x_{out})
 \end{eqnarray*}

\noindent
Fig.~\ref{approx} illustrates the difference between the abstract fixed point and the approximate fixed point.
Points in the figure correspond to the elements of $T^3$, while the grey zone represents the convex polyhedron defined by $P^3$. 

\begin{figure}
\begin{center}
  \includegraphics[width=3cm,height=3cm]{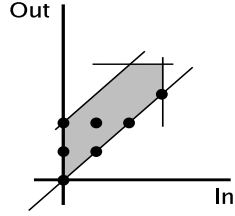}
\end{center}
\caption{Exact and approximated fixed point}
\label{approx}
\end{figure}

\noindent
An approximation of the solutions of $w(\M_1, \M_2, \M_3, x < 2, x=x+1)$ is given by: 
\begin{eqnarray*}
Q &=& P^3(x_1,x_3) \land x_3 \geq 2 \\
  &=& x_3 \geq 2 \land x_3 \leq 4 \land x_1 \leq x_3 \land x_1 \leq 3 \land  x_1 \geq x_3 - 2
\end{eqnarray*}

\noindent
On the Polyhedral domain, convergence of the fixed point computation over $w(\M_1, \M_2, \M_3, x < 2, x=x+1)$ can be enforced by using widening techniques.
The computation of $P^{k+1}$ is modified in order to use a widening operator $\nabla$ \cite{CH78}. Thus, we have: 

\begin{eqnarray*}
P^{k+1} &=& P^k(Init,Out) \nabla (P^k \land  \alpha_{poly}(w(\M_1,\M_2, \M_3, Dec,Body))) 
\end{eqnarray*}

\noindent
A concrete algorithm for computing this approximation is given in \cite{DGD07a}, which permits to build implementation of $w$ in a constraint solver. As rooted in the Abstract Interpretation domain, the relation $w$ inherits from some of its fundamental correctness results, i.e., soundness and termination.  However, it is worth pinpointing some differences.
%The polyhedra abstract domain is generally used differently from what
%we presented. % (see for example \cite{CH78} or~\cite{HPR97}). 

\noindent
Usually, a convex abstract polyhedron denotes the set of linear relations that hold
over variables at a given point of a sequential program under analysis. 
As the goal here is to correctly approximate the set of solutions of a \textit{w} relation, the polyhedron describes
relations between input and output values and, thus, they
involve more variables in the equations. In Abstract Interpretation, the
analysis can be performed only once, whereas, in the case of the $w$ relation, the $join$ operation is launched everytime the relation is awaked without being succesfull in solving one of the guarded-constraint.
As a consequence, we found out that it was not reasonable to use standard libraries to compute over polyhedra, such as PPL~\cite{BRZ02}, 
because they use a {\it dual representation for Polyhedra}, which is a source of exponential time
computations for the conversion.  

\subsection{Illustrative example}

Looking at an iterative computation over unbounded domains as a relation captured by a $w$ constraint operator is 
interesting for adressing Constraint-Based Reacheability problems. On the one hand, the suspension mechanism offered by constraint reasoning allows us 
to cope with the approximation problem, i.e., the set of states that is considered is determined by the informations existing 
in the constraint store, which makes the reasoning more accurate w.r.t. the property to be demonstrated. On the other hand, adding abstract domain 
computations to the $w$ relation allows us to increase the level of deductions that can be achieved at each awakening of the $w$ constraint operator. 
To illustrate this remark, consider the following $C$ program:
\begin{verbatim}
f(  int i, ...  )  {
a.    j = 100;
b.      while( i > 0)
c.        { j=j+1 ; i=i-1 ;}
d.      ... 
e.	if( j > 500)
f.	    ...
\end{verbatim}
\noindent
A typical reachability problem is to find out a value of {\tt i} such that statement {\tt f.} is executed. Existing approaches for solving this reachability problem consider a path passing through {\tt f.}, e.g., {\tt a-b-d-e-f}, and try to solve the {\it path condition} attached to this path. In this case, it means extracting constraint $j_1 = 100 \land i_1 \leq 0 \land j_1 > 500$ and solving it to show that the constraint system is unsatisfiable, i.e., the corresponding path is infeasible. Then, these approaches backtrack to select another path (e.g., {\tt a-b-c-b-d-e-f} with path condition 
$j_1 = 100 \land i_1 > 0 \land j_2 = j_1+1 \land i_2 = i_1-1 \land i_2 \leq 0 \land j_2 > 500$) and repeat the process again, until a satisfiable path condition is found. This example is pathologic for these approaches, as only the paths that iterate more than $400$ times in the loop will reach statement {\tt f.}. Hopefully, using the constraint operator $w(\M_1, \M_2, \M_3, i > 0, j=j+1 \land i=i-1)$ permits us to unrool dynamically $400$ times the loop without backtracking. The relational analysis performed on the Polyhedral abstract domain by the $w$ operator determines that $j_{out} - i_{in} = 100$ whatever be the number of loop unrollings. Here, combining precise constraint reasoning in the concrete domain, with constraint extrapolation through abstract domain computations, offers us an efficient way of solving reachability problems on infinite-state systems.

%Our representation implies, nevertheless, doing many
%variables elimination with the Fourier elimination algorithm. This
%remains costly when the number of variables grows. 
%However, the
%abstraction on polyhedra is only one among others. For example,
%abstraction on intervals is efficient but leads to less accurate
%deductions. The octagon abstract domain~\cite{M06} could be an
%interesting alternative to polyhedra as it is considered to be a good
%trade-off between accuracy and efficiency.

%Generalized Propagation~\cite{LW92} infers an over-approximation of
%all the answers of a CLP program. This is done by explicitely
%computing each answer and joining these answers on an abstract domain.
%Generalized Propagation may not terminate because of recursion in CLP
%programs. Indeed, no widening techniques are used. In the same idea,
%the 3r's are three principles that can be applied to speed up CLP
%programs execution~\cite{MS93}.  One of these 3r's stands for
%\textit{refinement}, which consists in generating redundant
%constraints that approximate the set of answers.  Refinement uses
%abstract interpretation, and more specifically widenings, to compute
%on abstract domains that have infinite increasing chains. Hence, the
%analysis is guaranteed to terminate. Our approach is an instantiation
%of this theoretical scheme to the domain of polyhedra.

%\vspace{-0.2cm}
\section{Conclusions}

In this paper, we have presented Constraint-Based Reachability as a process to combine constraint reasoning and abstraction techniques for solving reachability problems in infinite-state systems. The contribution is two-fold: first, we have revisited constraint consistency-filtering techniques by the prism of abstract domain computations~; second, we explained how to introduce abstract domain computation within the $w$ constraint operator reasoning. We have illustrated these notions with several examples in order to ease the understanding of the reader.

This appraoch has been implemented and tested on several problems, including real-world programs \cite{Got09,Got12}. The goal is now to broader the scope of these techniques that combine constraint reasoning and abstraction techniques, to adress fundamental problems such as reachability in infinite-state systems.

\section*{Acknowledgements}
We are indebted to Bernard Botella and Mireille Ducass\'e for fruitful discussions on earlier versions of this work.

%\Appendix
\section*{Appendix}
This appendix contains the proofs of some of the results stated in the paper.
\bigskip

\begin{theorem}
\label{theo:arc}
  Let $p$ be a filtering operator associated with constraint $c_i$, then $p$ computes domain-consistency iff
  $p = \sh{f_{i\_arc}}$. 
\end{theorem}

\begin{proof}
  ($\Leftarrow$) Let $S_1 = (f_i \circ \gamma_{arc})(S)$. From the definitions of
  $f_i$ and $\gamma$, we get that $S_1$ is the solution set of constraint $c_i$,
  given the initial domains $S$ (we write $S_1 = sol(c_i,S)$).  Hence, $S' = \alpha_{arc}(S_1) =
  (A_1, \ldots, A_m)$ with \\$A_k = \{x_k~|~x \in sol(c_i,S)\}$.
  So, $c_i$ computes domain-consistency.\\
  ($\Rightarrow$) Let $p$ be a domain-consistency filtering operator. 
  Suppose that there exists $S$ such that 
  $p(S) = (A_1,\ldots,A_m)$ be strictly greater than
  $\sh{f_{i\_arc}}(S) = (B_1, \ldots, B_m)$.  Then, there exists at least 
  one $k$ such as $A_k \supsetneq B_k$. Hence, there exists an
  element $x_k$ of $A_k$ that does not belong to any solution of constraint $c_i$. 
  Hence, $p$ cannot computes domain-consistency which is contradictory with the hypothesis.
  On the other side, $p$ cannot be smaller than
  $\sh{f_{i\_arc}}$ as it means that the filtering operator removes solutions.
  Hence, if $p$ computes domain-consistency then $p = \sh{f_{i\_arc}}$.
 $\Box$
\end{proof}

\begin{theorem}
 If $p$ is a filtering operator associated to constraint $c_i$, then  $p$ computes 
 bound-consistency iff $p = \sh{f_{i\_bound}}$. 
\end{theorem}

\begin{proof}
  ($\Leftarrow$) From theorem~\ref{theo:arc}, given initial intervals $I$, 
  the domains $\sh{f_{i\_arc}} \circ \gamma_{inter}(I)$ are domain-consistent for constraint $c_i$.
  Applying function $\alpha_{inter}$ is similar to the process that keeps extremal values of each element of 
  $\sh{f_{i\_arc}} \circ \gamma_{inter}(I)$. Hence, the resulting intervals satisfy the 
  bound-consistency property.\\
  ($\Rightarrow$) (similar to the proof of theorem~\ref{theo:arc}) 
  If the filtering operator $p$ is greater than $\sh{f_{i\_bound}}$, 
  then the computed intervals contain at least one bound that is not part
  of a solution of $c_i$, violating so the bound-consistency property.
  On the contrary, by supposing that $p$ is smaller than
  $\sh{f_{i\_bound}}$ then solutions are lost and $p$ is no more a filtering operator.
  Hence, if $p$ is a filtering operator guaranteeing bound-consistency
  then $p = \sh{f_{i\_bound}}$. $\Box$
\end{proof}

%\nocite{*}
\bibliographystyle{eptcs}
\bibliography{../../../these,../../../AG_publis}

\begin{thebibliography}{10}
\providecommand{\bibitemdeclare}[2]{}
\providecommand{\surnamestart}{}
\providecommand{\surnameend}{}
\providecommand{\urlprefix}{Available at }
\providecommand{\url}[1]{\texttt{#1}}
\providecommand{\href}[2]{\texttt{#2}}
\providecommand{\urlalt}[2]{\href{#1}{#2}}
\providecommand{\doi}[1]{doi:\urlalt{http://dx.doi.org/#1}{#1}}
\providecommand{\bibinfo}[2]{#2}

\bibitemdeclare{article}{Apt99a}
\bibitem{Apt99a}
\bibinfo{author}{K.~\surnamestart Apt\surnameend} (\bibinfo{year}{1999}):
  \emph{\bibinfo{title}{The Essence of Constraint Propagation}}.
\newblock {\sl \bibinfo{journal}{Theoretical Computer Science}}
  \bibinfo{volume}{221}(\bibinfo{number}{1-2}), pp. \bibinfo{pages}{179--210}, \doi{10.1016/S0304-3975(99)00032-8}.

\bibitemdeclare{inproceedings}{BRZ02}
\bibitem{BRZ02}
\bibinfo{author}{Roberto \surnamestart Bagnara\surnameend},
  \bibinfo{author}{Elisa \surnamestart Ricci\surnameend}, \bibinfo{author}{Enea
  \surnamestart Zaffanella\surnameend} \& \bibinfo{author}{Patricia~M.
  \surnamestart Hill\surnameend} (\bibinfo{year}{2002}):
  \emph{\bibinfo{title}{Possibly Not Closed Convex Polyhedra and the Parma
  Polyhedra Library}}.
\newblock In \bibinfo{editor}{\surnamestart Springer\surnameend}, editor: {\sl
  \bibinfo{booktitle}{SAS'02: In M.~V. Hermenegildo and G.~Puebla, editors,
  Proc. of the Static Analysis Symposium}}, {\sl \bibinfo{series}{LNCS}}
  \bibinfo{volume}{2477}, pp. \bibinfo{pages}{213--229},
\doi{10.1007/3-540-45789-5\_17}.

\bibitemdeclare{article}{BH11}
\bibitem{BH11}
\bibinfo{author}{S.~\surnamestart Bardin\surnameend} \&
  \bibinfo{author}{P.~\surnamestart Herrmann\surnameend}
  (\bibinfo{year}{2011}): \emph{\bibinfo{title}{OSMOSE: Automatic Structural
  Testing of Executables}}.
\newblock {\sl \bibinfo{journal}{Software Testing, Verification and Reliability
  (STVR)}} \bibinfo{volume}{21}(\bibinfo{number}{1}), pp.
  \bibinfo{pages}{29--54}, \doi{10.1002/stvr.423}.

\bibitemdeclare{inproceedings}{BCE08}
\bibitem{BCE08}
\bibinfo{author}{Peter \surnamestart Boonstoppel\surnameend},
  \bibinfo{author}{Cristian \surnamestart Cadar\surnameend} \&
  \bibinfo{author}{Dawson \surnamestart Engler\surnameend}
  (\bibinfo{year}{2008}): \emph{\bibinfo{title}{RWset: Attacking path explosion
  in constraint-based test generation}}.
\newblock In: {\sl \bibinfo{booktitle}{Int. Conference on Tools and Algorithms
  for the Constructions and Analysis of Systems (TACAS'08)}}, pp.
  \bibinfo{pages}{351--366}, \doi{10.1007/978-3-540-78800-3\_27}.

\bibitemdeclare{article}{BGM06}
\bibitem{BGM06}
\bibinfo{author}{B.~\surnamestart Botella\surnameend},
  \bibinfo{author}{A.~\surnamestart Gotlieb\surnameend} \&
  \bibinfo{author}{C.~\surnamestart Michel\surnameend} (\bibinfo{year}{2006}):
  \emph{\bibinfo{title}{Symbolic execution of floating-point computations}}.
\newblock {\sl \bibinfo{journal}{The Software Testing, Verification and
  Reliability journal}} \bibinfo{volume}{16}(\bibinfo{number}{2}), pp.
  \bibinfo{pages}{pp 97--121}, \doi{10.1002/stvr.333}.

\bibitemdeclare{article}{Car96}
\bibitem{Car96}
\bibinfo{author}{Richard~H. \surnamestart Carver\surnameend}
  (\bibinfo{year}{1996}): \emph{\bibinfo{title}{Testing abstract distributed
  programs and their implementations: A constraint-based approach}}.
\newblock {\sl \bibinfo{journal}{Journal of Systems and Software}}
  \bibinfo{volume}{33}(\bibinfo{number}{3}), pp. \bibinfo{pages}{223--237}, \doi{10.1016/0164-1212(96)00024-6}.

\bibitemdeclare{article}{CBG09}
\bibitem{CBG09}
\bibinfo{author}{F.~\surnamestart Charreteur\surnameend},
  \bibinfo{author}{B.~\surnamestart Botella\surnameend} \&
  \bibinfo{author}{A.~\surnamestart Gotlieb\surnameend} (\bibinfo{year}{2009}):
  \emph{\bibinfo{title}{Modelling dynamic memory management in Constraint-Based
  Testing}}.
\newblock {\sl \bibinfo{journal}{The Journal of Systems and Software}}
  \bibinfo{volume}{82}(\bibinfo{number}{11}), pp. \bibinfo{pages}{1755--1766}.
\newblock \bibinfo{note}{Special Issue: TAIC-PART 2007 and MUTATION 2007}, \doi{10.1016/j.jss.2009.06.029}.

\bibitemdeclare{inproceedings}{CG10}
\bibitem{CG10}
\bibinfo{author}{F.~\surnamestart Charreteur\surnameend} \&
  \bibinfo{author}{A.~\surnamestart Gotlieb\surnameend} (\bibinfo{year}{2010}):
  \emph{\bibinfo{title}{Constraint-Based Test Input Generation for Java
  Bytecode}}.
\newblock In: {\sl \bibinfo{booktitle}{Proc. of the 21st IEEE Int. Symp. on
  Softw. Reliability Engineering (ISSRE'10)}}, \bibinfo{address}{San Jose, CA,
  USA}, \doi{10.1109/ISSRE.2010.26}.


\bibitemdeclare{inproceedings}{CHL06}
\bibitem{CHL06}
\bibinfo{author}{Chiu~Wo \surnamestart Choi\surnameend},
  \bibinfo{author}{Warwick \surnamestart Harvey\surnameend},
  \bibinfo{author}{J.~H.~M. \surnamestart Lee\surnameend} \&
  \bibinfo{author}{Peter~J. \surnamestart Stuckey\surnameend}
  (\bibinfo{year}{2006}): \emph{\bibinfo{title}{Finite Domain Bounds
  Consistency Revisited}}.
\newblock In: {\sl \bibinfo{booktitle}{Australian Conference on Artificial
  Intelligence}}, pp. \bibinfo{pages}{49--58}, \doi{10.1007/11941439\_9}.

\bibitemdeclare{inproceedings}{CC77}
\bibitem{CC77}
\bibinfo{author}{P.~\surnamestart Cousot\surnameend} \&
  \bibinfo{author}{R.~\surnamestart Cousot\surnameend} (\bibinfo{year}{1977}):
  \emph{\bibinfo{title}{Abstract Interpretation : A unified lattice model for
  static analysis of programs by construction or approximation of fixpoints}}.
\newblock In: {\sl \bibinfo{booktitle}{Proceedings of Symp. on Principles of
  Programming Languages}}, \bibinfo{publisher}{ACM}, pp.
  \bibinfo{pages}{238--252}, \doi{10.1145/512950.512973}.

\bibitemdeclare{inproceedings}{CH78}
\bibitem{CH78}
\bibinfo{author}{P.~\surnamestart Cousot\surnameend} \&
  \bibinfo{author}{N.~\surnamestart Halbwachs\surnameend}
  (\bibinfo{year}{1978}): \emph{\bibinfo{title}{Automatic Discovery of Linear
  Restraints Among Variables of a Program.}}
\newblock In: {\sl \bibinfo{booktitle}{Proceedings of Symp. on Principles of
  Programming Languages}}, \bibinfo{publisher}{ACM}, pp.
  \bibinfo{pages}{84--96}, \doi{10.1145/512760.512770}.

\bibitemdeclare{inproceedings}{CC76}
\bibitem{CC76}
\bibinfo{author}{Patrick \surnamestart Cousot\surnameend} \&
  \bibinfo{author}{Radhia \surnamestart Cousot\surnameend}
  (\bibinfo{year}{1976}): \emph{\bibinfo{title}{Static Determination of dynamic
  properties of programs}}.
\newblock In: {\sl \bibinfo{booktitle}{Proc. of the 2nd International Symp. on
  Programming}}, \bibinfo{publisher}{Dunod}, pp. \bibinfo{pages}{106--130}.

\bibitemdeclare{article}{DP01}
\bibitem{DP01}
\bibinfo{author}{Giorgio \surnamestart Delzanno\surnameend} \&
  \bibinfo{author}{Andreas \surnamestart Podelski\surnameend}
  (\bibinfo{year}{2001}): \emph{\bibinfo{title}{Constraint-based deductive
  model checking}}.
\newblock {\sl \bibinfo{journal}{International Journal on Software Tools for
  Technology Transfer (STTT)}} \bibinfo{volume}{3}(\bibinfo{number}{3}), pp.
  \bibinfo{pages}{250--270}, \doi{10.1007/s100090100049}.

\bibitemdeclare{inproceedings}{DGD07a}
\bibitem{DGD07a}
\bibinfo{author}{T.~\surnamestart Denmat\surnameend},
  \bibinfo{author}{A.~\surnamestart Gotlieb\surnameend} \&
  \bibinfo{author}{M.~\surnamestart Ducasse\surnameend} (\bibinfo{year}{2007}):
  \emph{\bibinfo{title}{An Abstract Interpretation Based Combinator for
  Modeling While Loops in Constraint Programming}}.
\newblock In: {\sl \bibinfo{booktitle}{Proceedings of Principles and Practices
  of Constraint Programming (CP'07)}}, \bibinfo{series}{Springer Verlag, LNCS
  4741}, \bibinfo{address}{Providence, USA}, pp. \bibinfo{pages}{241--255}, \doi{10.1007/978-3-540-74970-7\_19}.

\bibitemdeclare{inproceedings}{DGD07b}
\bibitem{DGD07b}
\bibinfo{author}{T.~\surnamestart Denmat\surnameend},
  \bibinfo{author}{A.~\surnamestart Gotlieb\surnameend} \&
  \bibinfo{author}{M.~\surnamestart Ducasse\surnameend} (\bibinfo{year}{2007}):
  \emph{\bibinfo{title}{Improving Constraint-Based Testing with Dynamic Linear
  Relaxations}}.
\newblock In: {\sl \bibinfo{booktitle}{18th IEEE International Symposium on
  Software Reliability Engineering (ISSRE' 2007)}},
  \bibinfo{address}{Trollhättan, Sweden}, \doi{10.1109/ISSRE.2007.12}.

\bibitemdeclare{article}{Fla04}
\bibitem{Fla04}
\bibinfo{author}{Cormac \surnamestart Flanagan\surnameend}
  (\bibinfo{year}{2004}): \emph{\bibinfo{title}{Automatic software model
  checking via constraint logic.}}
\newblock {\sl \bibinfo{journal}{Sci. Comput. Program.}}
  \bibinfo{volume}{50}(\bibinfo{number}{1-3}), pp. \bibinfo{pages}{253--270}.
\newblock \urlprefix\url{http://dx.doi.org/10.1016/j.scico.2004.01.006}.

\bibitemdeclare{inproceedings}{FW07}
\bibitem{FW07}
\bibinfo{author}{Gordon \surnamestart Fraser\surnameend} \&
  \bibinfo{author}{Franz \surnamestart Wotawa\surnameend}
  (\bibinfo{year}{2007}): \emph{\bibinfo{title}{Test-Case Generation and
  Coverage Analysis for Nondeterministic Systems Using Model-Checkers}}.
\newblock In: {\sl \bibinfo{booktitle}{ICSEA}}, p.~\bibinfo{pages}{45},
\doi{10.1109/ICSEA.2007.71}.

\bibitemdeclare{inproceedings}{GKS05}
\bibitem{GKS05}
\bibinfo{author}{P.~\surnamestart Godefroid\surnameend},
  \bibinfo{author}{N.~\surnamestart Klarlund\surnameend} \&
  \bibinfo{author}{K.~\surnamestart Sen\surnameend} (\bibinfo{year}{2005}):
  \emph{\bibinfo{title}{DART: directed automated random testing}}.
\newblock In: {\sl \bibinfo{booktitle}{Proc. of PLDI'05}}, pp.
  \bibinfo{pages}{213--223}, \doi{10.1145/1064978.1065036}.

\bibitemdeclare{inproceedings}{Got09}
\bibitem{Got09}
\bibinfo{author}{A.~\surnamestart Gotlieb\surnameend} (\bibinfo{year}{2009}):
  \emph{\bibinfo{title}{EUCLIDE: A Constraint-Based Testing platform for
  critical C programs}}.
\newblock In: {\sl \bibinfo{booktitle}{2th IEEE International Conference on
  Software Testing, Validation and Verification (ICST'09)}},
  \bibinfo{address}{Denver, CO}, \doi{10.1109/ICST.2009.10}.

\bibitemdeclare{article}{Got12}
\bibitem{Got12}
\bibinfo{author}{A.~\surnamestart Gotlieb\surnameend} (\bibinfo{year}{2012}):
  \emph{\bibinfo{title}{TCAS software verification using Constraint
  Programming}}.
\newblock {\sl \bibinfo{journal}{The Knowledge Engineering Review}}
  \bibinfo{volume}{27}(\bibinfo{number}{3}), pp. \bibinfo{pages}{343--360}, \doi{10.1017/S0269888912000252}.

\bibitemdeclare{inproceedings}{GBR98}
\bibitem{GBR98}
\bibinfo{author}{A.~\surnamestart Gotlieb\surnameend},
  \bibinfo{author}{B.~\surnamestart Botella\surnameend} \&
  \bibinfo{author}{M.~\surnamestart Rueher\surnameend} (\bibinfo{year}{1998}):
  \emph{\bibinfo{title}{Automatic Test Data Generation Using Constraint Solving
  Techniques}}.
\newblock In: {\sl \bibinfo{booktitle}{Proc. of Int. Symp. on Soft. Testing and
  Analysis (ISSTA'98)}}, pp. \bibinfo{pages}{53--62}, \doi{10.1145/271771.271790}.

\bibitemdeclare{inproceedings}{GBR00}
\bibitem{GBR00}
\bibinfo{author}{A.~\surnamestart Gotlieb\surnameend},
  \bibinfo{author}{B.~\surnamestart Botella\surnameend} \&
  \bibinfo{author}{M.~\surnamestart Rueher\surnameend} (\bibinfo{year}{2000}):
  \emph{\bibinfo{title}{A CLP Framework for Computing Structural Test Data}}.
\newblock In: {\sl \bibinfo{booktitle}{Proceedings of Computational Logic
  (CL'2000)}}, \bibinfo{series}{LNAI 1891}, \bibinfo{address}{London, UK}, pp.
  \bibinfo{pages}{399--413}, \doi{10.1007/3-540-44957-4\_27}.

\bibitemdeclare{inproceedings}{GDB05b}
\bibitem{GDB05b}
\bibinfo{author}{A.~\surnamestart Gotlieb\surnameend},
  \bibinfo{author}{T.~\surnamestart Denmat\surnameend} \&
  \bibinfo{author}{B.~\surnamestart Botella\surnameend} (\bibinfo{year}{2005}):
  \emph{\bibinfo{title}{Constraint-based test data generation in the presence
  of stack-directed pointers}}.
\newblock In: {\sl \bibinfo{booktitle}{20th IEEE/ACM International Conference
  on Automated Software Engineering (ASE'05)}}, \bibinfo{address}{Long Beach,
  CA, USA}.
\newblock \bibinfo{note}{4 pages}, \doi{10.1145/1101908.1101958}.

\bibitemdeclare{article}{GDB07}
\bibitem{GDB07}
\bibinfo{author}{A.~\surnamestart Gotlieb\surnameend},
  \bibinfo{author}{T.~\surnamestart Denmat\surnameend} \&
  \bibinfo{author}{B.~\surnamestart Botella\surnameend} (\bibinfo{year}{2007}):
  \emph{\bibinfo{title}{Goal-oriented test data generation for pointer
  programs}}.
\newblock {\sl \bibinfo{journal}{Information and Soft. Technol.}}
  \bibinfo{volume}{49}(\bibinfo{number}{9-10}), pp.
  \bibinfo{pages}{1030--1044}, \doi{10.1016/j.infsof.2006.10.016}.

\bibitemdeclare{inproceedings}{HJM03}
\bibitem{HJM03}
\bibinfo{author}{T.~\surnamestart Henzinger\surnameend},
  \bibinfo{author}{R.~\surnamestart Jhala\surnameend},
  \bibinfo{author}{R.~\surnamestart Majumdar\surnameend} \&
  \bibinfo{author}{G.~\surnamestart Sutre\surnameend} (\bibinfo{year}{2003}):
  \emph{\bibinfo{title}{Software verification with Blast}}.
\newblock In: {\sl \bibinfo{booktitle}{Proc. of 10th Workshop on Model Checking
  of Software (SPIN)}}, pp. \bibinfo{pages}{235--239}, \doi{10.1007/3-540-44829-2\_17}.

\bibitemdeclare{book}{MS98}
\bibitem{MS98}
\bibinfo{author}{K.~\surnamestart Marriott\surnameend} \& \bibinfo{author}{P.J.
  \surnamestart Stuckey\surnameend} (\bibinfo{year}{1998}):
  \emph{\bibinfo{title}{{P}rogramming with {C}onstraints~: {A}n
  {I}ntroduction}}.
\newblock \bibinfo{publisher}{The MIT Press}.

\bibitemdeclare{book}{Moo66}
\bibitem{Moo66}
\bibinfo{author}{R.A. \surnamestart Moore\surnameend} (\bibinfo{year}{1966}):
  \emph{\bibinfo{title}{Interval Analysis}}.
\newblock \bibinfo{publisher}{Prentice Hall}, \bibinfo{address}{New Jersey}.

\bibitemdeclare{article}{WZ91}
\bibitem{WZ91}
\bibinfo{author}{M.N. \surnamestart Wegman\surnameend} \& \bibinfo{author}{F.K.
  \surnamestart Zadeck\surnameend} (\bibinfo{year}{1991}):
  \emph{\bibinfo{title}{{C}onstant {P}ropagation with {C}onditional
  {B}ranches}}.
\newblock {\sl \bibinfo{journal}{ACM Transactions on Programming Language and
  Systems}} \bibinfo{volume}{13}(\bibinfo{number}{2}), pp.
  \bibinfo{pages}{181--210}, \doi{10.1145/103135.103136}.

\bibitemdeclare{inproceedings}{WMM05}
\bibitem{WMM05}
\bibinfo{author}{N.~\surnamestart Williams\surnameend},
  \bibinfo{author}{B.~\surnamestart Marre\surnameend},
  \bibinfo{author}{P.~\surnamestart Mouy\surnameend} \&
  \bibinfo{author}{M.~\surnamestart Roger\surnameend} (\bibinfo{year}{2005}):
  \emph{\bibinfo{title}{PathCrawler: Automatic Generation of Path Tests by
  Combining Static and Dynamic Analysis}}.
\newblock In: {\sl \bibinfo{booktitle}{Proc. Dependable Computing - EDCC'05}}, \doi{10.1007/11408901\_21}.

\end{thebibliography}
\end{document}